\documentclass{article}

\usepackage{graphicx} % embedded graphics
\usepackage{amssymb,comment,algorithm,algorithmic,bbm,stmaryrd,gastex}
\usepackage{amsmath, amsthm}
\usepackage{multirow,rotating}
\usepackage{amsfonts}
\usepackage{amssymb}
\usepackage{times}
\usepackage{pifont}
\usepackage{wrapfig}
\usepackage{cite}
\usepackage[UKenglish]{babel}

\newtheorem{claimN}{Claim}
\newcommand{\cut}[1]{}

\newenvironment{reminder}[1]{\medskip

\noindent {\bf Reminder of Theorem #1.  }\em}{}

\def \QED {\hfill{$\Box$}}
\DeclareMathAlphabet {\mathpzc}{OT1}{pzc}{m}{it}

\newenvironment{proofof}[1]{\noindent {\em Proof of #1.  }}{\QED}

\newif\ifFull 

\newenvironment{reminderclaim}[1]{\medskip
\noindent {\bf Reminder of Claim #1.  }\em}{}
\newtheorem{definition}{Definition}
\newtheorem{theorem}{Theorem}
\newtheorem{lemma}{Lemma}
\newtheorem{remark}{Remark}

\newcommand{\acknowledge}[1]{} % Change this to #1 if paper is accepted

%Titles for Algorithms
\newcommand{\BndMemRegMin}{\mathsf{BW}}
\newcommand{\RegMinImperfInfo}{\mathsf{ EXBWII}}
\newcommand{\RegMinPerfInfo}{\mathsf{ EXBW}}
\newcommand{\SampleAlgorithm}{\mathbf{Sample}\left(\PerfectInfoExpertSet\right)}
\newcommand{\FixedDefender}{F}
\newcommand{\KAdaptiveDefenderSet}{ \mathcal{A}^K_D}
\newcommand{\KAdaptiveAdversarySet}[1]{\mathcal{A}^{#1}_A}
\newcommand{\AverageRegret}[1]{\bar{R}_{#1}}
\newcommand{\AveragePay}{\bar{P}}
\newcommand{\Pay}{P}
\newcommand{\KAdaptiveAdversary}[1]{A_{#1}}
\newcommand{\Adversary}{A}
\newcommand{\Defender}{D}
\newcommand{\GameClass}{\mathcal{G}}

\newcommand{\BoundedMemoryGamesImperfectInfo}{\mathcal{G}}
\newcommand{\DefenderActions}{\mathcal{X}_D}
\newcommand{\AdversaryActions}{\mathcal{X}_A}
\newcommand{\OutcomeSet}{\mathcal{O}}
\newcommand{\Consistent}[1]{\mathcal{C}\left(#1\right)}
\newcommand{\ConsistentAll}{\mathcal{C}}
\newcommand{\ReachableTrace}[1]{\mathcal{R}\left( #1 \right)}
\newcommand{\RepeatedGame}[2]{\rho\left(#1,#2 \right)}
\newcommand{\PerfectInfoExpertSet}{\mathcal{E}}

\DeclareGraphicsExtensions{.jpg, .pdf, .eps}

%Extended Abstract
%\Fullfalse

%\Fulltrue 
% uncomment for full version of paper

\title{Adaptive Regret Minimization in Bounded-Memory Games \thanks{This work was partially supported by the U.S. Army Research Office contract ``Perpetually Available and Secure Information Systems" (DAAD19-02-1-0389) to Carnegie Mellon CyLab, the NSF Science and Technology Center TRUST, the NSF CyberTrust grant ``Privacy, Compliance and Information Risk in Complex Organizational Processes,'' the AFOSR MURI ``Collaborative Policies and Assured Information Sharing,'' and HHS Grant no. HHS 90TR0003/01. Jeremiah Blocki was also partially supported by a NSF Graduate Fellowship. Arunesh Sinha was also partially supported by the CMU CIT Bertucci Fellowship. The views and conclusions contained in this document are those of the authors and should not be interpreted as representing the official policies, either expressed or implied, of any sponsoring institution, the U.S. government or any other entity.}}
%\titlerunning{Adaptive Regret Minimization in Bounded-Memory Games}
\author{
Jeremiah Blocki \\ Carnegie Mellon University\\  \texttt{jblocki@cs.cmu.edu} \and 
Nicolas Christin \\ Carnegie Mellon University\\   \texttt{nicolasc@cmu.edu} \and 
Anupam Datta \\ Carnegie Mellon University\\   \texttt{danupam@cmu.edu} \and
Arunesh Sinha \\Carnegie Mellon University\\  \texttt{aruneshs@cmu.edu}
 }

%\keywords{Regret Minimization, Bounded Memory Games, Approximation, Hardness}

\begin{document}

\maketitle

% 
% $Id: abstract.tex 1656 2012-02-13 17:34:04Z jblocki $
% 
\begin{abstract}
Organizations that collect and use large volumes of personal information often use security audits to protect data subjects from inappropriate uses of this information by authorized insiders. In face of unknown incentives of employees, a reasonable audit strategy for the organization (defender) is one that minimizes his regret. While regret minimization has been extensively studied in repeated games, a repeated game cannot capture the full complexity of the interaction between the organization (defender) and an insider (adversary) that arises from dependence of rewards and actions on history. We introduce a richer class of games called \emph{bounded memory games}, which can provide a more accurate model of the audit process. The standard notion of regret for repeated games is no longer suitable because actions and rewards can depend on the history of play. To account for this generality, we introduce the notion of {\em
$k$-adaptive regret}, which compares the reward obtained by playing actions prescribed by the algorithm against a
hypothetical {\em $k$-adaptive adversary} with the reward obtained by the best expert in hindsight against the same adversary. Roughly, a hypothetical $k$-adaptive adversary adapts her strategy to the defender's actions exactly as the real adversary would within each window of $k$ rounds. A $k$-adaptive adversary is a natural model for temporary employees who stay for a certain
number of audit cycles and are then replaced by a different person. Our definition is parametrized by a set of experts, which can include both fixed and adaptive defender strategies.

$\quad$ We investigate the inherent complexity of and design algorithms for adaptive regret minimization in bounded memory games of perfect and imperfect information. We prove a hardness result showing that, with imperfect information, any $k$-adaptive regret minimizing algorithm (with fixed strategies as experts) must be inefficient unless {\sf NP}
$=$ {\sf RP} even when playing against an oblivious adversary. In contrast, for bounded memory games of perfect and imperfect information we present approximate $0$-adaptive regret minimization algorithms against an oblivious adversary running in time $n^{O\left(1\right)}$.
\end{abstract}

\section{Introduction} 
\label{sec:Introduction}
%
% Motivating example
%
% Repeated games are "memory-less"
% Many games have a notion of memory
% 
%

Online learning algorithms that minimize regret provide strong guarantees in
situations that involve repeatedly making decisions in an uncertain
environment. There is a well developed theory for regret minimization in repeated games\cite{blum2007learning}. The goal of this paper is to study regret minimization for a richer class of settings. As a motivating example consider a hospital (defender) where a \emph{series} of temporary employees or business affiliates (adversary) access patient records for legitimate purposes (e.g., treatment or payment) or inappropriately (e.g., out of curiosity about a family member or for financial gain). The hospital conducts audits to catch the violators, which involves expending resources in the form of time spent in human investigation.
On the other hand, violations that are missed internally and caught externally (by Government audits, patient complaints, etc.) also result in various losses such as repuation loss, loss due to litigation, etc. The hospital wants to minimize its overall loss by balancing the cost of audits with the risk of externally detected violations. In these settings with unknown adversary incentives, a reasonable strategy for the defender is one that minimizes her regret.

\cut{As a motivating example consider a store that sells perishable goods (e.g., bread, milk, eggs) and faces a series of different customers every $k$ rounds. The store manager may be uncertain about the future demand for certain goods. Nevertheless, each round the store manager must decide how to price the goods and whether or not to restock. Another motivating example involves developing effective auditing strategies in an adversarial environment: Consider a hospital (defender) where a series of different employees or business affiliates (adversary) access patient records for legitimate purposes (e.g., treatment or payment) or inappropriately (e.g., out of curiosity about a family member or for financial gain). The hospital wants to minimize its overall loss by balancing the cost of audits with the risk of externally detected violations. 
In these settings, a reasonable strategy for the defender (manager) is one that
minimizes her regret. }\cut{An auctioneer who repeatedly sells goods to different groups of bidders will want to learn from past experience, even if the bidders are different in every auction \cite{chakraborty2008online}. }

Modeling this interaction as a repeated game of imperfect information is challenging because this game has two additional characteristics that are not captured by a repeated game model: (1) {\em History-dependent rewards}: The payoff function depends not only on the current outcome but also on previous outcomes. For example, when a violation occurs the hospital might experience a greater loss if other violations have occured in recent history. (2) {\em History-dependent actions}: Both players may \emph{adapt} their strategies based on history. For example, if many violations have been detected and punished in recent history then a rational employee might choose to lay low rather than committing another violation.

Instead, we capture this form of history dependence by introducing \emph{ bounded memory games}, a subclass of stochastic games \footnote{Stochastic games~\cite{shapley1953stochastic}\acknowledge{We thank Avrim Blum for suggesting stochastic games as a suitable model.} are expressive enough to model history dependence. However, there is no regret minimization algorithm for the \emph{general class of stochastic games}. While we do not view this result as surprising or novel, we include it in Appendix \ref{apx:StochasticGames} for completeness.}. In each round of a two-player bounded-memory-$m$ game, both players simultaneously play an action, observe an outcome and receive a reward. In contrast to a repeated game, the payoffs may depend on the state of the game. In contrast to a {\em general} stochastic game, the rewards may \emph{only} depend on the outcomes from the last $m$ rounds (e.g., violations that were caught in the last $m$ rounds) as well as the actions of the players in the current round. 
% This game model is suitable for a number of scenarios considered by economists, e.g., when a firm decides how much to invest to prevent pollution and the pollution tax-rate depends on emissions of all firms in the last $m$ months (a bounded memory variant of the \emph{pollution game}~\citet{NS2003}). 

In a bounded memory game, the standard notion of regret for a repeated game is not suitable because the adversary may adapt her actions based on the history of play.  To account for this generality, we introduce (in Section~\ref{sec:regretMin}) the notion of {\em
$k$-adaptive regret}, which compares the reward obtained by playing actions prescribed by the algorithm against a
hypothetical {\em $k$-adaptive adversary} with the reward obtained by the best expert in hindsight against the same adversary. Roughly, a hypothetical $k$-adaptive adversary plays exactly the same actions as the real adversary except in the last $k$ rounds where she adapts her strategy to the defender's actions exactly as the real adversary would.   When $k = 0$, this definition coincides with the
standard definition of an \emph{oblivious adversary} considered in defining
regret for repeated games.  When $k = \infty$ we get a \emph{fully adaptive
adversary}. 
A $k$-adaptive adversary is a natural model for temporary employees (e.g., residents, contractors) who stay for a certain
number of audit cycles and are then replaced by a different person.
%We remark that $k$-adaptive adversaries are similar to the notion of ``patient'' players in long-run games discussed by ~\cite{LongRun}. 
Our definition is parameterized by a set of experts, which can include both
fixed and adaptive defender strategies. In section \ref{sec:Audit} we use the example of a police chief enforcing the speed limit at a popular tourist destination (or a hospital auditing accesses to the patient records made by residents) to illustrate the power of $k$-adaptive regret minimization when the defender plays against a series of temporary adversaries. 

\cut{We show how to model the audit interaction using bounded memory games, and we illustrate the notion of $k$-adaptive regret with several examples. We believe that bounded memory $k$-adaptive regret may also be natural choice in many other settings (e.g., the series of different customers in the chainstore game, different bidders in a repeated auction).}
%We instantiate this set 
%to the set of fixed defender strategies in proving hardness results and to the union 
%of fixed and $K$-adaptive defender strategies when designing regret minimizing
%algorithms. 

Next, we investigate the inherent complexity of and design algorithms for adaptive 
regret minimization in bounded-memory games of perfect 
and imperfect information. 
Our results are summarized in Table \ref{table:BoundedMemory}.
We prove a hardness result (Section~\ref{sec:hardness}; Theorem
\ref{thm:hardness}) showing that, with imperfect information, any $k$-adaptive
regret minimizing algorithm (with fixed strategies as experts) must be inefficient unless {\sf NP} $=$ {\sf RP}
even when playing against an oblivious adversary and even when $k = 0$.  In
fact, the result is even stronger and applies to any $\gamma$-approximate
$k$-adaptive regret minimizing algorithm (ensuring that the regret bound
converges to $\gamma$ rather than $0$ as the number of rounds $T \rightarrow
\infty$) for $\gamma < \frac{1}{8n^{\beta}}$ where $n$ is the number of states in the game and $\beta
>0$. Our hardness reduction from MAX3SAT uses the state 
of the bounded-memory game and the history-dependence of rewards in a critical way. \cut{ Using a slightly stronger complexity-theoretic assumption, we improve
this bound to include any value of $\gamma$  less than $\frac{1}{8 \log^2 n}$.
Technically, the hardness results are established
via reduction from MAX3SAT. In the reduction, each of the 
exponentially many possible variable assignments corresponds to an expert (fixed strategy); 
performing as well as the best fixed strategy in the game corresponds to satisfying as many 
clauses as the best variable assignment. }

We present an inefficient $k$-adaptive regret minimizing algorithm by reducing
the bounded-memory game to a repeated game. The algorithm is inefficient
for bounded-memory games when the number of experts is exponential in the number of states of the game (e.g., if all fixed strategies are experts). In contrast, for bounded-memory games of perfect information, we present an efficient
$n^{O\left(1/\gamma\right)}$ time $\gamma$-approximate $0$-adaptive regret
minimization algorithm against an oblivious adversary for any constant $\gamma
> 0$ (Section~\ref{sec:algorithm};Theorem \ref{thm:PerfectInformation}). We also show how this algorithm can be adapted
to get an efficient  $\gamma$-approximate $0$-adaptive regret minimization
algorithm for bounded-memory games of imperfect information
(Section~\ref{sec:algorithm};Theorem \ref{thm:ImperfectInformationApproximateRegret}). 
The main novelty in these algorithms is an implicit weight representation for an 
exponentially large set of adaptive experts, which includes all fixed strategies.

\begin{table}[h] \begin{tabular}{|p{4.5cm}||p{3.5cm}|p {3.5cm}|} \hline & Imperfect
Information & Perfect Information \\ \hline \hline Oblivious Regret $(k=0)$ &
Hard (Theorem \ref{thm:hardness}) & APX (Theorem \ref{thm:PerfectInformation})
\\ & APX (Theorem \ref{thm:ImperfectInformationApproximateRegret})  &     \\ \hline 

$k$-Adaptive Regret $( k \geq 1)$&   Hard (Theorem \ref{thm:hardness}) &  Hard (Remark \ref{remark:HardnessExtension} )\\ \hline
Fully Adaptive Regret $(k=\infty)$ & X  (Theorem \ref{thm:stochasticImpossible}) & X (Theorem \ref{thm:stochasticImpossible}) \\ \hline \end{tabular}

\caption{ {\bf Regret Minimization in Bounded Memory Games}\newline  X - no
regret minimization algorithm exists  \newline  Hard - unless {\sf NP} $=$ {\sf
RP} no regret minimization algorithm is efficiently computable \newline APX -
efficient approximate regret minimization algorithms exist.  \newline }
\label{table:BoundedMemory}

\end{table}

\section{Related Work} \label{sec:related}
A closely related work is the Regret Minimizing Audit (RMA) mechanism of Blocki et al. \cite{blocki2011regret}, which uses a repeated game model for the audit process. RMA deals with history-dependent rewards under certain 
assumptions about the defender's payoff function, but it does not consider history-dependent actions. While RMA provides strong performance guarantees for the defender against a byzantine adversary, the performance of RMA may be far from optimal when the adversary is rational (or nearly rational). In subsequent work the same authors \cite{BCDS12} introduced a model of a nearly rational adversary who behaves in a rational manner most of the time. A nearly rational adversary can usually be deterred from committing policy violations by high inspection and punishment levels. They suggested that the defender commit to his strategy before each audit round (e.g., by publically releasing its inspection and punishment levels) as in a Stackelberg game \cite{von2011market}. However, the paper gives no efficient algorithm for computing the Stackelberg equilibrium. 

More recent work by Blocki et al. introduced the notion of Audit Games \cite{AuditGames} --- a simplified game theoretic model of the audit process in which the adversary is purely rational (unlike the nearly rational adversary of \cite{BCDS12}). Audit Games generalize the model of Security Games \cite{tambe} by including punishment level as part of the defenders action space. Because the punishment parameter introduces quadratic constraints into the optimization problem that must be solved to compute the Stackelberg equilibria, this apparently small change makes it difficult to find the Stackelberg equilibria. The primary technical contribution of \cite{BCDS12} is an efficient algorithm for computing the Stackelberg equilibrium of Audit Games. There are two potential advantages of the $k$-adaptive regret framework compared with the Stackelberg equilibria appraoch: (1) The $k$-adaptive regret minimization algorithm can be used even if the adversary's incentives are unknown, and (2) A $k$-adaptive adversary is a better model for a short term adversary (e.g., contractors, tourists) who may not informed about the defender's policy --- and therefore may not even know what the `rational' best response is in a Stackelberg game. See section \ref{sec:Audit} for additional discussion.

%\label{sec:related}
%Our proposed bounded-memory games are directly related to a class of games called ``stochastic games.'' 
Stochastic games were defined by Shapley                  
\cite{shapley1953stochastic}. Much of the work on stochastic games has focused on finding and computing equilibria for these games         
\cite{shapley1953stochastic,mertens1981stochastic}. There has been lot of work in regret minimization for repeated games\cite{blum2007learning}. Regret minimization in stochastic games has not been the subject of much research. Papadimitriou and Yannakakis showed that many natural optimization problems relating to stochastic games are hard \cite{papadimitriou1999complexity}. These results don't apply to bounded memory games. Golovin and Krause recently showed that a simple greedy algorithm can be used when a stochastic optimization problem satisfies a property called adaptive submodularity \cite{golovin2010AdaptiveSubmodularity}. In general, bounded memory games do not satisfy this property. Even-Dar, et al., show 
that regret minimization is possible for a class of stochastic 
games (Markov Decision Processes) in which the adversary chooses the reward function at each state but does not 
influence the transitions\cite{even2005experts}. They also prove that if the adversary controls the reward function and the transitions, then it is {\sf NP-Hard} to even approximate the best fixed strategy. Mannor and Shimkin \cite{mannor2003empirical} show that if the adversary completely controls the transition model (a Controlled Markov Process) then it is possible to separate the stochastic game into a series of matrix games and efficiently minimize regret in each matrix game.  Bounded-memory games are a different subset of stochastic games where the transitions and rewards are influenced by both players.  While our hardness proof shares techniques with  Even-Dar, et al.,\cite{even2005experts}, there are significant differences that arise from the bounded-memory nature of the game. We provide a detailed comparison in Section \ref{sec:hardness}.

In a recent paper, Even-Dar, et al., \cite{evenlearning} handle a few specific global cost functions related to load balancing. These cost functions depend on history. In their setting, the adversary obliviously plays actions from a joint distribution. In contrast, we consider arbitrary cost functions with bounded dependence on history and adaptive adversaries. 

Takimoto and Warmuth \cite{takimoto2003path} developed an efficient online shortest path algorithm. In their setting the experts consists of all fixed paths from the source to the destination. Because there may be exponentially many paths their algorithm must use an implicit weight representation. Awerbuch and Kleinberg later provided a general framework for online linear optimization \cite{awerbuch2008online}. In our settings, an additional challenge arises because \emph{experts adapt to adversary actions}. See Section~\ref{sec:algorithm} for a more detailed comparison.

Farias, et al., \cite{Farias2006} introduce a special class of adversaries that they call ``flexible" adversaries. A defender playing against a flexible adversary can minimize regret by learning the average expected reward of every expert. Our work differs from theirs in two ways. First, we work with  a stochastic game as opposed to a repeated game. Second, our algorithms can handle a sequence of different $k$-adaptive adversaries instead of learning a single flexible adversary strategy. A single $k$-adaptive strategy is flexible, but a sequence of $k$-adaptive adversaries is not. 

%rounds as opposed to the adversary in their set-up whose adaptive strategy is  fixed forever (that enables learning the average expected reward of every expert for ``flexible'' adversaries).

%
% $Id: colt_preliminaries.tex 1829 2012-08-15 22:44:55Z jblocki $
% 
\section{Preliminaries} 
\label{sec:preliminaries}

\label{subsec:BoundedMemoryGames}
% Story line - model hospital as stochastic game
% Our particular game could be well modeled as a bounded-memory game
Bounded-memory games are a sub-class of stochastic games, in which outcomes and states satisfy certain properties. Formally, a two-player stochastic  
game between an attacker~$A$ and a defender~$D$ is given by $(\DefenderActions, \AdversaryActions, \Sigma, \Pay, \tau)$, where $\AdversaryActions$ and $\DefenderActions$ are the actions spaces for players~$A$ and $D$,           
respectively, $\Sigma$ is the state space, $\Pay:  
\Sigma \times \DefenderActions \times  \AdversaryActions \rightarrow [0,1]$ is the payoff function and $\tau: \Sigma \times \DefenderActions \times  \AdversaryActions \times \{0,1\}^* \rightarrow \Sigma$ is the randomized transition  function linking the different states. Thus, the payoff during round $t$ depends on the current state          
(denoted $\sigma^t$) in addition to the actions of the defender ($d^t$) and the adversary ($a^t$). We use $n=\left|\Sigma \right|$ to denote the number of states.

A {\em bounded-memory game with memory $m$} ($m \in \mathbb{N}$) is a stochastic game with the following properties:
(1) The game satisfies independent outcomes, and
(2) The states $\Sigma = \mathcal{O}^m$ encode the last $m$ outcomes, i.e., 
$ \sigma^i = \left( O^{i-1},\ldots, O^{i-m} \right)$. An outcome of a given round of play is a signal observed by both players (called ``public signal'' in games~\cite{fundenberggame}).  Outcomes depend probabilistically on the actions taken by the players. We use $\OutcomeSet$ to denote the outcome space and $O^t \in \OutcomeSet$ to denote the outcome during round $t$. We say that a game satisfies {\em independent outcomes} if $O^t$ is conditionally independent of $\left(O^1,...,O^{t-1}\right)$ given $d^t$ and $a^t$. Notice that the defender and the adversary in a game with independent outcomes may still select their actions based on history. However, once those actions have been selected, the outcome is independent of the game history. Note that a repeated game is a bounded-memory-$0$ game (a bounded-memory game with memory $m = 0$).

A game in which players only observe the outcome $O^t$ after round $t$ but not the actions taken during a round is called an  {\em imperfect information} game. If both players also observe the actions then the game is a {\em perfect information} game.

The {\em history} of a game $H = \left(O^1,O^2,\ldots,O^i,\ldots,O^t\right) \ ,$  is the sequence of outcomes. We use $H_k$ to denote the $k$ most recent outcomes in the game (i.e., $H_k = \left(O^{t-k+1}; \ldots; O^{t}\right)$), and 
$t=|H|$ to denote the total number of rounds played. We use $H^i$ to
denote the first $i$ outcomes in a history (i.e., $H^i = \left(O^1, \ldots, O^i\right)$), and $H;H\sp{\prime}$ to denote concatenation of histories $H$ and $H\sp{\prime}$.

A {\em fixed strategy} for the defender in a stochastic game is a function $f:\Sigma \rightarrow \DefenderActions$ mapping each state to a fixed action. $\FixedDefender$ denotes the set of all fixed strategies.

\section{Definition of Regret}  \label{sec:regretMin}
As discussed earlier, regret minimization in repeated games has         
received a lot of attention \cite{blum2005external}. Unfortunately,    
the standard definition of regret in repeated games does not directly   
apply to stochastic games. In a repeated game, regret is computed by    
comparing the performance of the defender strategy $\Defender$ with   
the performance of a fixed strategy~$f$. However, in a stochastic       
game, the actions of the defender and the adversary in round~$i$ 
influence payoffs in each round for the rest of the game. Thus, it      
is unclear how to choose a meaningful fixed strategy~$f$ as a      
reference. We solve this conundrum by introducing an adversary-based definition of regret.                                       

\subsection{Adversary Model} \label{subsec:AdversaryModel}
We define a parameterized class of adversaries called $k$-adaptive adversaries, where the parameter $k$ denotes the level of adaptiveness of the adversary. 
Formally, we say that an agent is {\em $k$-adaptive} if its strategy $\Adversary(H)$ is defined by a function $f:\OutcomeSet^* \times \mathbb{N}\rightarrow \AdversaryActions$ such that $\Adversary(H) = f\left(H_i,t\right)$, where $i = t \mod \left(k + 1 \right)$. Recall that $H_i$ is the $i$ most recent outcomes, and $t = |H|$.

As special cases we define an {\em oblivious 
adversary} $\left(k = 0\right)$ and a {\em fully adaptive adversary} $\left(k = \infty\right)$. Oblivious adversaries essentially play        
without any memory of the previous outcomes. Fully adaptive             
adversaries, on the other hand, choose their actions based on the       
entire outcome history since the start of the game. $k$-adaptive        
adversaries lie somewhere in between. At the start of the game, they    
act as fully adaptive adversaries, playing with the entire outcome      
history in mind. But, different from fully adaptive adversaries, every  
$k$~rounds, they ``forget'' about the entire history of the game and    
act as if the whole game was starting afresh. As discussed earlier,
there are numerous practical instances where $k$-adaptive adversaries
are an appropriate model; for instance, in games in which one player
(e.g., a firm) has a much longer length of play than the adversary
(e.g., a temporary employee), it may be judicious to model the adversary
as $k$-adaptive.
In particular, 
$k$-adaptive adversaries are similar to the notion of ``patient''       
players in long-run games discussed by \cite{LongRun}. 
Their notion of ``fully patient'' players correspond to fully         
adaptive adversaries, ``myopic'' players correspond to oblivious        
adversaries, and ``not myopic but less patient'' players correspond to  
$k$-adaptive adversaries.                                               

Another possible adversary definition could be to consider      
a sliding window of size $k$ as the adversary memory. But,       
because such an adversary can play actions to remind herself of events  
in the arbitrary past, her memory is not actually bounded by $k$,       
and regret minimization is not possible. See Appendix \ref{apx:StochasticGames} for more discussion.

%$k$-adaptive players are useful to model adversaries that play a    
%certain number of rounds before leaving the game and being replaced by 
%a new adversary with no memory of the game. 
%Such a model can be used    
%to reflect employee turnover in the hospital example we introduced      
%earlier. 

$\KAdaptiveDefenderSet$ and $\KAdaptiveAdversarySet{K}$ denote all possible $K$-adaptive strategies for the defender and adversary, respectively.

\subsection{$k$-Adaptive Regret}
Suppose that the defender $\Defender$ and the adversary $\Adversary$ have produced history $H$ in a game $G$          
lasting~$T$ rounds. Let $a^1,...,a^T$ denote the sequence of actions    
played by the adversary. In hindsight we can construct a hypothetical $k$-adaptive adversary $\KAdaptiveAdversary{k}$ as follows:
      \[\KAdaptiveAdversary{k}\left(H'\right) = \Adversary\left(H^{t-i};H_i'\right)  \ ,\]
where $t = |H'|$ and $i = t \mod \left(k+1\right)$. In other words, the hypothetical $k$-adaptive adversary replicates the plays the real adversary made in the actual game regardless of the strategy of the defender he is playing against, {\em except} for the last $i$~rounds under consideration where he adapts his strategy to the defender's actions in the same manner the real adversary would. \cut{There are two important special cases: 
(1) Hypothetical Oblivious Adversary $\left(\KAdaptiveAdversary{0} \right)$: The hypothetical oblivious adversary plays a fixed sequence of actions always, 
(2) Hypothetical (Fully) Adaptive Adversary $\left(\KAdaptiveAdversary{\infty} \right)$: The hypothetical fully adaptive adversary {\em is} identical to the real adversary.}

Abusing notation slightly we write $\Pay\left(f,\Adversary,G, \sigma_0, T \right)$ to denote the expected payoff the defender would receive over $T$ rounds of $G$ given that the defender plays strategy $f$, the adversary uses strategy $\Adversary$ and the initial state of the bounded-memory game $G$ is $\sigma_0$. We use $ \AveragePay\left(f,\Adversary,G, T\right) =   \Pay\left(f,\Adversary,G, \sigma_0, T \right)/T$ to denote the average per-round payoff. We use \[\AverageRegret{k}\left(\Defender,\Adversary,G,T,S \right) = \max_{f \in S} \AveragePay\left(f,\KAdaptiveAdversary{k},G,T\right) - \AveragePay\left(\Defender, \KAdaptiveAdversary{k},G,T\right) \ , \]
to denote the {\em $k$-adaptive regret} of the defender strategy $\Defender$ using a fixed set $S$ of experts against an adversary strategy $\Adversary$ for $T$ rounds of the game $G$.

%Note: Modeling adversary as a deterministic strategy. Ok because we can think of flipping all of the coins in advance to select a deterministic adversary.
\begin{definition} \label{def:RegMinAlgApproximate}
A defender strategy $\Defender$ using a fixed set $S$ of experts is a {\em $\gamma$-approximate $k$-adaptive regret minimization algorithm } for the class of games $\GameClass$ if and only if for every adversary strategy $\Adversary$, every $\epsilon >0$ and every game $G \in \GameClass$ there exists $T' > 0$ such that $\forall T > T'$
\[\AverageRegret{k}\left(\Defender, \Adversary,G, T,S\right) < \epsilon + \gamma \ .\]
If $\gamma = 0$ then we simply refer to $\Defender$ as a k-adaptive regret minimization algorithm. If $\Defender$ runs in time $poly\left(n,1/\epsilon \right)$ we call $\Defender$ {\em efficient}. 
\end{definition}

$k$-adaptive regret considers a $k$-adaptive hypothetical adversary 
who can adapt within each window of size (at most) $k+1$. Intuitively,    
as $k$ increases this measure of regret is more meaningful (as the      
hypothetical adversary increasingly resembles the real adversary),      
albeit harder to minimize. 

There are two important special cases to consider: $k=0$ (oblivious regret) and $k=\infty$ (adaptive regret). Adaptive regret is the strongest measure of regret. Observe that if the actual adversary is $k$-adaptive then the hypothetical adversary $\KAdaptiveAdversary{\infty}$ is same as the hypothetical adversary $\KAdaptiveAdversary{k}$, and hence $\AverageRegret{\infty} = \AverageRegret{k}$. Also, if the actual adversary is oblivious then $\AverageRegret{\infty} = \AverageRegret{0} = \AverageRegret{k}$. 

In this paper $\GameClass$ will typically denote the class of perfect/imperfect information bounded-memory games with memory $m$. We are interested in expert sets $S$ which contain all of the fixed strategies $\FixedDefender \subseteq S$.

%It is also easy to
%show that for $k=O(m)$ no efficient ``oblivious regret'' minimization
%algorithm exists against the class of $k$-adaptive adversaries (without
%assuming ${\sf RP} \neq {\sf NP}$).

\section{Audit Examples} \label{sec:Audit}
As an example, consider the interaction between a police chief (defender) and drivers (adversary) at a popular tourist destination. The police chief is given the task of enforcing speed limits on local roads. Each day the police chief may deploy resources (e.g., radar, policemen) to monitor local roads, and drivers decide whether or not to speed or not. 

\paragraph{Repeated Game}
We first model the interaction above using a repeated game. We will consider a simple version of this interaction in which the defender has two actions \[\DefenderActions =  \left\{ \mathbf{HI}, \mathbf{LI} \right\} \ , \] and the adversary has two actions 
\[ \AdversaryActions = \left\{ \mathbf{S}, \mathbf{DS} \right\} \ . \]
Here,  {\bf HI}/{\bf LI} stands for high/low inspection and {\bf S}/{\bf DS} stands for speed and don't speed. We consider the defender utilities in table \ref{tab:speedingDefenderUtility}.

\begin{wraptable}{r}{3cm}
\centering
\begin{tabular}{| c | c | c |}
\hline
Actions & {\bf S} & {\bf DS} \\
\hline
{\bf HI} & .19 & 0.7 \\
\hline
{\bf LI} & 0.2 & 1 \\
\hline
\end{tabular}%
\caption{Speeding Game --- Defender Utility $\Pay$}%
\label{tab:speedingDefenderUtility}
\end{wraptable}

In this example, the costs of a higher inspection outweigh the benefits of enforcing the policy. In {\em any} Nash Equilibria the defender will play his dominant strategy  --- ``always play {\bf LI}." Similarly, {\em any} algorithm that minimizes regret in the standard sense (0-adaptive) --- like the regret minimizing audit mechanism from \cite{blocki2011regret} --- must eventually converge to the dominant defender strategy {\bf LI}. While this is the best that the defender can do against a byzantine adversary, this may not always be the best result for the defender when playing against a rational adversary. Consider the adversary's utility defined in table \ref{tab:speedingAdversaryUtility}.

\begin{wraptable}{r}{3cm}
\centering
\begin{tabular}{| c | c | c |}
\hline
Actions & {\bf S} & {\bf DS} \\
\hline
{\bf HI} & 0 & 0.8 \\
\hline
{\bf LI} & 1 & 0.8 \\
\hline
\end{tabular}
\caption{Speeding Game --- Adversary Utility }
\label{tab:speedingAdversaryUtility}
\end{wraptable}

If the defender plays his dominant strategy then the adversary will always play the action {\bf S} --- speed. This action profile results in average utility $0.2$ for the defender and $1$ for the adversary. However, if the defender can commit to his strategy in advance then he can play his Stackelberg equilibrium \cite{von2011market} strategy ``play {\bf HI} with probability $0.2$ and {\bf LI} with probability $0.8$." A rational adversary will respond by playing her best response --- the action that maximizes her utility given the defenders commitment. In this case the adversary's best response is to play {\bf DS}. The resulting utility for the defender is $0.94$! 

There are two practical challenges with adopting this approach: (1) If the utility of the adversary is unknown then the defender cannot compute the Stackelberg equilibrium. (2) Even if the defender commits to playing a Stackelberg equilibrium it is unlikely that many drivers will respond in purely rational manner for the simple reason that they are uniformed (e.g., a tourist may not know whether or not speed limits are aggressively enforce in an unfamiliar area). If the adversary can learn the Stackelberg Equilibrium from a history of the defender's actions, then she might adapt her play to the best response strategy over time. However, each tourist has a limited time window in which she can make these observations and adjust her behavior (e.g., the tourist leaves after at most $k$ days). 

\paragraph{Bounded Memory Game Model with $k$-adaptive regret}
We model the interaction above using bounded memory games with k-adaptive adversary model. In each round of our bounded memory game the defender and the adversary play an action profile, and observe an outcome --- a public signal. The action space in our bounded memory game is identical to the repeated game, and the outcome $\OutcomeSet=\{ \mathbf{HI}, \mathbf{LI}\}$ is simply the defender's action. That is we assume that our tourist driver can observe the defender's inspection level in each round (e.g., by counting the number of police cars by the side of the road). The defender's payoff function is identical to table \ref{tab:speedingDefenderUtility} --- the defender's payoff is independent of the current state (e.g., rewards in this particular bounded memory game are not history-dependent). A $k$-adaptive regret minimization algorithm could be run without a priori knowledge of the adversary's utility, and will converge to the optimal fixed strategy against any $k$-adaptive adversary (e.g., any sequence of $k$-adaptive tourist strategies). 

It is reasonable to use a $k$-adaptive strategy to model the behavior of our tourist drivers. Each tourist initially has no history of the defender's actions --- during the first day of her visit a tourist must make the decision about whether or not to speed without any history of the defender's actions. After the first day the tourist may adapt his behavior based on previous outcomes. For example, a tourist might adopt the following $k$-adaptive strategy: $\mathcal{A}_1 = $ ``Play {\bf DS} on the first day, and on the remaining $(k-1)$ days play {\bf S} if the defender has never played {\bf HI} previously, otherwise play {\bf DS}." After $k$ days the tourist leaves and a new tourist arrives. This new tourist may adopt a different $k$-adaptive strategy (e.g., $\mathcal{A}_2=$ ``Play {\bf S} on the first day, and on the remaining $(k-1)$ days play {\bf S} if the defender has never played {\bf HI} previously, otherwise play {\bf DS}.").

We set the memory of our bounded memory game to be $m=k$. Now the fixed defender strategies $\FixedDefender$ in our bounded memory game include strategies like $f=$ ``play $\mathbf{HI}$ every $k$'th round". Suppose for example that $k=7$ and the defender plays $f$. In this case the sequence of rewards that the defender would see against the first $k$-adaptive adversary $\mathcal{A}_1$ would be $(0.7,1,1,1,1,1,1)$. The sequence of rewards that the defender would see against the second $k$-adaptive adversary $\mathcal{A}_2$ would be $(0.19, 1, 1, 1,1,1,1)$. It is easy to verify that this is the optimal result for the defender --- if the defender does not play ${\bf HI}$ on the first day then the 7-adaptive adversary will speed on day 2. A $k$-adaptive regret minimization algorithm could be run without a priori knowledge of the adversary's utility, and will converge to the optimal fixed strategy against any $k$-adaptive adversary (e.g., any sequence of $k$-adaptive tourist strategies). 

\begin{remark}
A $k$-adaptive adversary is also an appropriate model for a temporary employee at the hospital so we could also consider the interaction between a hospital (defender) and a resident (adversary) at the hospital. The actions {\bf S} and {\bf DS}(e.g., ``speed" and ``don't speed") would be replaced with corresponding actions {\bf B} and {\bf V} (e.g., ``behave" and ``violate"). 
\end{remark}

Unfortunateley, we are able to prove that there is no efficient $k$-adaptive regret minimization algorithm for general bounded memory games. However, our results do not rule out the posibility of an efficient $\gamma$-approximate $k$-adaptive regret minimization algorithm. Finding an efficient $\gamma$-approximate $k$-adaptive regret minimization algorithms is an important open problem.

\section{Hardness Results} \label{sec:hardness}
% User paragraph headings instead of subsections
In this section, we show that unless {\sf NP} = {\sf RP} no oblivious regret minimization algorithm which uses the fixed strategies $F$ as experts can be efficient in the imperfect information setting. In Appendix \ref{apdx:Hardness} we explain how our hardness reduction can be adapted to prove that there is no efficient $k$-adaptive regret minimization algorithm in the perfect information setting for $k \geq 1$.

Specifically, we consider the subclass of bounded-memory games $\BoundedMemoryGamesImperfectInfo$ with the following properties:
$\left|\mathcal{O} \right| = O(1)$,  $m = O\left(\log n\right)$, $\left|\AdversaryActions\right| = O(1)$, $\left| \DefenderActions\right| = O(1)$ and imperfect information. Any $G \in \BoundedMemoryGamesImperfectInfo$ is a game of imperfect information (on round $t$ the defender observes $O^t,$ but not $a^t$) with $O(n)$ states. Our goal is to prove the following theorem:
\newcommand{\thmHardness}{For any $\beta > 0$ and $\gamma < 1/8n^\beta$ there is no efficient 
$\gamma$-approximate oblivious regret minimization algorithm which uses the fixed strategies $\FixedDefender$ as experts against oblivious adversaries for the class of imperfect information bounded-memory-$m$ 
games unless {\sf NP} $=$ {\sf RP}. }

\begin{theorem} \label{thm:hardness}
\thmHardness
\end{theorem}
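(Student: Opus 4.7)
The plan is to reduce MAX3SAT to $\gamma$-approximate oblivious regret minimization, using H\aa stad's inapproximability result that for every $\epsilon>0$ it is \textsf{NP}-hard to distinguish a satisfiable 3-CNF formula from one in which no assignment satisfies more than a $(7/8+\epsilon)$-fraction of the clauses. Assuming for contradiction that there is an efficient $\gamma$-approximate oblivious regret minimizer $\Defender$ using $\FixedDefender$ as experts with $\gamma < 1/(8n^\beta)$, we will use $\Defender$ to distinguish the two sides of this gap in randomized polynomial time, forcing $\textsf{NP} = \textsf{RP}$.

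Given a 3-CNF formula $\phi$ with $V$ variables and $C$ clauses, we construct $G_\phi \in \BoundedMemoryGamesImperfectInfo$ with $n = \mathrm{poly}(V+C)$ states, memory $m = O(\log n)$, and constant-size outcome, defender and adversary alphabets, together with an oblivious adversary strategy $\Adversary_\phi$. Play is partitioned into blocks of length $\ell = \lceil n^\beta\rceil$. At the start of each block the transition function (triggered by a fixed ``begin-block'' adversary action) samples a clause $C_j = (l_1 \vee l_2 \vee l_3)$ uniformly at random, and then over the next $O(\log n)$ rounds drives the outcomes so that the memory comes to encode the identity of one uniformly chosen literal $l_i \in C_j$ together with the polarity required to satisfy $l_i$; crucially, the clause index $j$ and the position $i$ within the clause are \emph{never} written into the memory. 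Because the defender has imperfect information she observes only these outcomes, so her fixed strategy $f : \Sigma \to \DefenderActions$ must commit in advance to a single action per ``query state,'' i.e.\ to a global truth assignment $\alpha : \{x_1,\ldots,x_V\}\to\{0,1\}$; this places $\FixedDefender$ (modulo irrelevant default actions on non-query states) in bijection with the $2^V$ assignments. In a single designated test round per block the payoff is $1$ if $f$'s action on the current query state matches the stored polarity and $0$ otherwise; every other one of the $\ell-1$ rounds yields a fixed baseline payoff $P_0$ independent of $\alpha$.

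Averaging over the uniform random choice of clause and literal, the fixed strategy $f_\alpha$ corresponding to $\alpha$ attains
\[
\AveragePay\!\left(f_\alpha, \Adversary_\phi, G_\phi, T\right) \;=\; P_0\,\frac{\ell-1}{\ell} \;+\; \frac{s(\alpha)}{C\ell} \;\pm\; o_T(1),
\]
where $s(\alpha)$ is the number of clauses of $\phi$ satisfied by $\alpha$ and $o_T(1) \to 0$ by Hoeffding. Hence $\max_{f \in \FixedDefender} \AveragePay = P_0(\ell-1)/\ell + \mathrm{OPT}(\phi)/(C\ell)$. Applying Definition \ref{def:RegMinAlgApproximate} with additive tolerance $\epsilon = \Theta\!\left((1 - 8\gamma n^\beta)/n^\beta\right) > 0$ and running $\Defender$ for $T = \mathrm{poly}(n, 1/\epsilon)$ rounds, Hoeffding on the observed rewards yields (w.h.p.) an empirical lower bound on $\mathrm{OPT}(\phi)/C$ that is sharp to within any desired constant strictly less than $1/8$, which separates a satisfiable $\phi$ from one with $\mathrm{OPT}(\phi) \le (7/8+\epsilon)C$ and therefore collapses $\textsf{NP}$ into $\textsf{RP}$.

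The main technical obstacle is engineering the transition function so that three properties hold simultaneously: (i) each block's query subroutine, using only $O(\log n)$ memory, writes exactly the variable index and polarity of a single literal drawn from the hidden clause, so that $\FixedDefender$ naturally parametrizes all $2^V$ assignments; (ii) the information visible to the defender during a block is insufficient to identify either the clause $j$ or the literal position $i$, so the defender is forced to commit to an assignment rather than to a clause-specific response --- this is exactly where imperfect information is used, and is why the same construction does not yield hardness in the perfect-information $k=0$ case; and (iii) the block length $\ell = \lceil n^\beta\rceil$ dilutes the assignment-dependent payoff by the factor that aligns the inapproximability gap with the $1/(8n^\beta)$ threshold in the theorem. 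Items (i) and (ii) are handled by writing only variable/polarity data (never clause/position data) into memory and by exploiting the outcome randomness to marginalize the adversary's private choices away; item (iii) is handled by padding each block with strategy-independent filler rounds.
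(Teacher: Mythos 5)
Your high-level skeleton matches the paper's: reduce from a MAX3SAT instance on roughly $n^\beta$ variables, invoke H{\aa}stad's $7/8$-inapproximability, hide the randomly chosen clause from the defender via imperfect information, and dilute the assignment-dependent reward by a factor of $n^\beta$ (your filler rounds play the role of the paper's $n^{1-\beta}$ copies of each variable along one Hamiltonian cycle of a De Bruijn graph). But your core gadget fails in two ways. First, you write the required polarity of the queried literal into the memory. In a bounded-memory game the state \emph{is} the last $m$ outcomes, and outcomes are public signals observed by both players; there is no hidden game state. So your query state is a pair (variable, required polarity), and the fixed strategy that simply echoes the polarity bit it reads off the state wins every test round for every formula. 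The claimed bijection between $\FixedDefender$ and truth assignments fails (a fixed strategy may answer the two polarities of the same variable differently), and the completeness/soundness gap collapses. The only place to hide the satisfying polarity is in the adversary's current-round action, which the defender does not observe under imperfect information --- this is exactly what the paper does: the reward is $1$ iff $d^i = \vec{a}^i[2]$, where $\vec{a}^i[2]$ is the adversary's hidden polarity signal.

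Second, even with the polarity hidden, querying a single uniformly random literal of the random clause does not encode MAX3SAT. The test-round success probability of an assignment $\alpha$ is the fraction of \emph{true literal occurrences}, not $s(\alpha)/C$: a clause is satisfied when at least one of its three literals is true, and a single-literal query loses this OR. Concretely, a uniformly random assignment makes each literal occurrence true with probability $1/2$ regardless of $\phi$, so even when no assignment satisfies more than a $(7/8+\epsilon)$-fraction of clauses there is an assignment whose test-round payoff is about $1/2$, which exceeds the $1/3$ guaranteed in the completeness case by one true literal per clause; the gap is destroyed (indeed inverted). The paper instead has the oblivious adversary walk the defender through \emph{all} variables in each phase (a Hamiltonian cycle driven by a De Bruijn sequence), rewards the defender the first time his action satisfies the hidden clause at any of its variables, and uses the $\hat{O}$ bit together with a penalty and the reset action to cap the reward at one per phase, so the per-phase payoff is exactly the indicator that the played assignment satisfies the clause. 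You would also still need an analogue of the paper's Claim \ref{claim:recoverAssignment} --- extracting a randomized approximation algorithm from an arbitrary efficient (possibly adaptive) defender, not merely bounding fixed strategies --- to close the soundness direction, but the two issues above already sink the construction as written.
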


Given a slightly stronger complexity-theoretic assumption called the randomized exponential time hypothesis \cite{impagliazzo2001complexity} we can prove a slightly stronger hardness result. The randomized exponential time hypothesis says that no randomized algorithm running in time $2^{o(n)}$ can solve SAT.

\newcommand{\thmHardnessExpTimeHyp}{Assume that the randomized exponential time hypothesis is true. Then for any $\gamma < 1/\left(8 \log^2 n \right)$ there is no efficient $\gamma$-approximate oblivious regret minimization algorithm which uses the fixed strategies $\FixedDefender$ as experts against oblivious adversaries for the class of imperfect information bounded-memory-$m$ games.}
\begin{theorem} \label{thm:hardnessExpTimeHyp}
\thmHardnessExpTimeHyp
\end{theorem}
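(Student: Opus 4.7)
The plan is to refine the reduction behind Theorem~\ref{thm:hardness} so as to exploit the finer-grained lower bound on SAT-solving afforded by the randomized exponential time hypothesis. Recall that the proof of Theorem~\ref{thm:hardness} constructs, from a 3SAT instance $\phi$ on $N$ variables, an imperfect-information bounded-memory game $G_\phi$ with $n = \mathrm{poly}(N)$ states in which each variable assignment corresponds to a distinct fixed-strategy expert and in which there is a payoff gap of at least $1/(8n^\beta)$ between satisfiable and unsatisfiable $\phi$. Under NP~$\neq$~RP, any efficient $\gamma$-approximate algorithm with $\gamma < 1/(8n^\beta)$ would yield a polynomial-time SAT decider, a contradiction.

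To upgrade the hardness threshold to $\gamma < 1/(8\log^2 n)$, I intend to re-tune the reduction so that a 3SAT instance on $N$ variables maps to a bounded-memory game with $n = 2^{\Theta(\sqrt{N})}$ states while still preserving a payoff gap of $\Omega(1/N)$ between the two cases. With this scaling, $\log^2 n = \Theta(N)$, and the preserved gap equals $\Omega(1/\log^2 n)$. A hypothetical efficient $\gamma$-approximate algorithm with $\gamma < 1/(8\log^2 n)$ would then decide $\phi$ in time $\mathrm{poly}(n,1/\gamma) = 2^{O(\sqrt{N})} \cdot \mathrm{poly}(N) = 2^{o(N)}$, contradicting rETH. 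Apart from the quantitative change in the state-space size and the tightened running-time accounting, the qualitative structure of the reduction---experts encode SAT assignments, and the oblivious adversary queries clauses via its fixed play---carries over from Theorem~\ref{thm:hardness}.

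The main obstacle is engineering the compressed state space without violating the class-membership constraints of $\BoundedMemoryGamesImperfectInfo$, namely outcome alphabet $|\mathcal{O}|$ of size $O(1)$, action sets of size $O(1)$, and memory $m = O(\log n) = O(\sqrt{N})$, while still (i) giving each of the $2^N$ variable assignments a distinct fixed-strategy expert and (ii) maintaining an $\Omega(1/N)$ gap in the best-expert average payoff between satisfiable and unsatisfiable $\phi$. A natural approach is a block-wise encoding that uses the bounded memory to record which block of $\sqrt{N}$ variables is currently being ``queried'' together with the relevant partial assignment bits observed in the last $m$ rounds; the per-round payoff contribution attributed to each clause then aggregates over $\Omega(N)$ rounds of the game into the desired $\Omega(1/N)$ gap. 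Verifying that such a compression respects the independent-outcomes property and that the $2^N$ experts remain separable under the compressed state is the key technical work, and once it is in place the contradiction with rETH follows immediately from the running-time accounting above.
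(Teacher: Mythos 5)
Your quantitative skeleton is exactly the paper's: reduce from a MAX3SAT formula on $N = \log^2 n$ variables so that the per-round payoff gap of $\Omega(1/N)$ between the best fixed strategy and any efficient defender becomes $\Omega(1/\log^2 n)$, and then observe that an efficient ($\mathrm{poly}(n)$-time) $\gamma$-approximate algorithm with $\gamma < 1/(8\log^2 n)$ would yield a randomized algorithm satisfying $\geq 7/8 + \epsilon$ of the clauses in time $n^{O(1)} = 2^{O(\sqrt{N})} = 2^{o(N)}$, contradicting the randomized exponential time hypothesis together with H{\aa}stad's inapproximability of MAX3SAT. The contradiction and the running-time accounting are correct and coincide with the paper's argument.

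However, the ``main obstacle'' you defer as key technical work --- engineering a compressed, block-wise state encoding and verifying that the $2^N$ assignments remain separable and that independent outcomes still holds --- is a phantom obstacle, and the paper does none of this. Since $n = 2^{\Theta(\sqrt{N})} \geq N$, the game is \emph{larger} than the formula, not smaller: the construction of Theorem~\ref{thm:hardness} is reused verbatim, and the only change is parametric. The adversary's Hamiltonian cycle of length $n$ over the De Bruijn graph is partitioned into $n/\log^2 n$ phases of length $\log^2 n$, so each variable is simply associated with $n/\log^2 n$ distinct states (padding, not compression); a fixed strategy encodes an assignment exactly as before, and Claims~\ref{claim:hardnessFixedStrategy}, \ref{claim:hardnessDefenderPerfomance} and \ref{claim:recoverAssignment} apply unchanged. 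There is no new independent-outcomes or expert-separability issue to verify, and the block-wise encoding you sketch would only complicate matters. Once you replace that step with the observation that padding suffices, your argument closes and is the paper's proof.
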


\newcommand{\lemmaHardness}{Unless {\sf NP} $=$ {\sf RP}, for $\gamma < 1/8n$ 
there is no efficient $\gamma$-approximate oblivious regret minimization algorithm which uses the fixed strategies $\FixedDefender$ as experts 
against oblivious adversaries for bounded-memory-$m$ games of imperfect information.}

The proofs of Theorems \ref{thm:hardness} and \ref{thm:hardnessExpTimeHyp} use the fact that it is hard to approximate MAX3SAT within any factor better than $\frac{7}{8}$ \cite{hastad2001some}. This means that unless {\sf NP} $=$ {\sf RP} then for every constant $\beta > 0$ and every randomized algorithm $S$ in $RP$, there exists a MAX3SAT instance $\phi$ such that the expected number of clauses in $\phi$ unsatisfied by $S(\phi)$ is $\geq \frac{1}{8} - \beta$ even though there exists an assignment satisfying $(1-\beta)$ fraction of the clauses in $\phi$.

We reduce a MAX3SAT formula $\phi$ with variables $x_1,...,x_n$ and clauses $C_1,...,C_\ell$ to a bounded-memory game $G$ described formally below.
We provide a high level overview of the game $G$ before describing the details. The main idea is to construct $G$ so that the rewards in $G$ are related to the fraction of clauses of $\phi$ that are satisfied.

In $G$, for each variable $x$ there is a state $\sigma_x$ associated with that variable. The oblivious adversary controls the transitions between variables. This allows the oblivious adversary $\Adversary_R$ to partition the game into stages of length $n$, such that during each stage the adversary causes the game to visit each variable exactly once (each state is associated with a variable). During each stage the adversary picks a clause $C$ at random. In $G$ we have $0,1 \in \DefenderActions$.  Intuitively, the defender chooses assignment $x=1$ by playing the action $1$ while visiting the variable $x$.  The defender receives a reward if and only if he succeeds in satisfying the clause $C$.

The game $G$ is defined as follows:\\
\noindent {\bf Defender Actions: } $\DefenderActions = \{0,1,2\}$ \\
\noindent {\bf Adversary Actions:} $\AdversaryActions = \{0,1\}\times \{0,1,2,3 \}$ \\
\noindent {\bf Outcomes and States: }
Each round $i$ produces two outcomes \[\tilde{ O}^i = \vec{a}^i[1] \mbox{ ~~~and~~~ } \hat{O}^i = \begin{cases}
1 & \text{if $d^i = 2$ or $d^i = a^i[2]$}; \\
0 & \text{otherwise}.
\end{cases} \] Observe that these outcomes satisfy the independent outcomes requirement for bounded-memory games. There are $n = 2^{m+1}$ states, where $\sigma^i$ is the state at round $i$, where \[ \sigma^i = \left(\langle \tilde{O}^{i-1}, \ldots ,\tilde{O}^{i-m}\rangle, \hat{O}^{i-1} \right) \ . \]   Observe that each state
encodes the last $m$ outcomes $\tilde{O}$ and the last outcome
$\hat{O}^i$. Intuitively, the last $m$ outcomes $\tilde{O}^i$ are used
to denote the variable $x_i$, while $\hat{O}^i$ is $1$ if the defender
has already received a reward during the current phase.

The defender actions $0,1$ correspond to the truth assignments $0,1$. The defender receives a reward for the correct assignment. The defender is punished if he attempts to obtain a reward in any phase after he has already received a reward in that phase. Once the defender has already received a reward he can play the special action $2$ to avoid getting punished. The intuitive meaning of the adversary's actions is explained below.

If we ignore the outcome $\hat{O}$ then the states form a De Bruijn graph \cite{good1946normal} where each node corresponds to a variable of $\phi$. Notice that the adversary completely controls the outcomes $\tilde{O}$ with the first component of his action $\vec{a}[1]$. By playing a De Bruijn sequence $S = s_1...s_n$ the adversary can guarantee that we repeatedly take a Hamiltonian cycle over states(for an example see Figure \ref{redVariableTransitions}). 

\begin{figure}

\begin{center}
\includegraphics[width=0.6\textwidth]{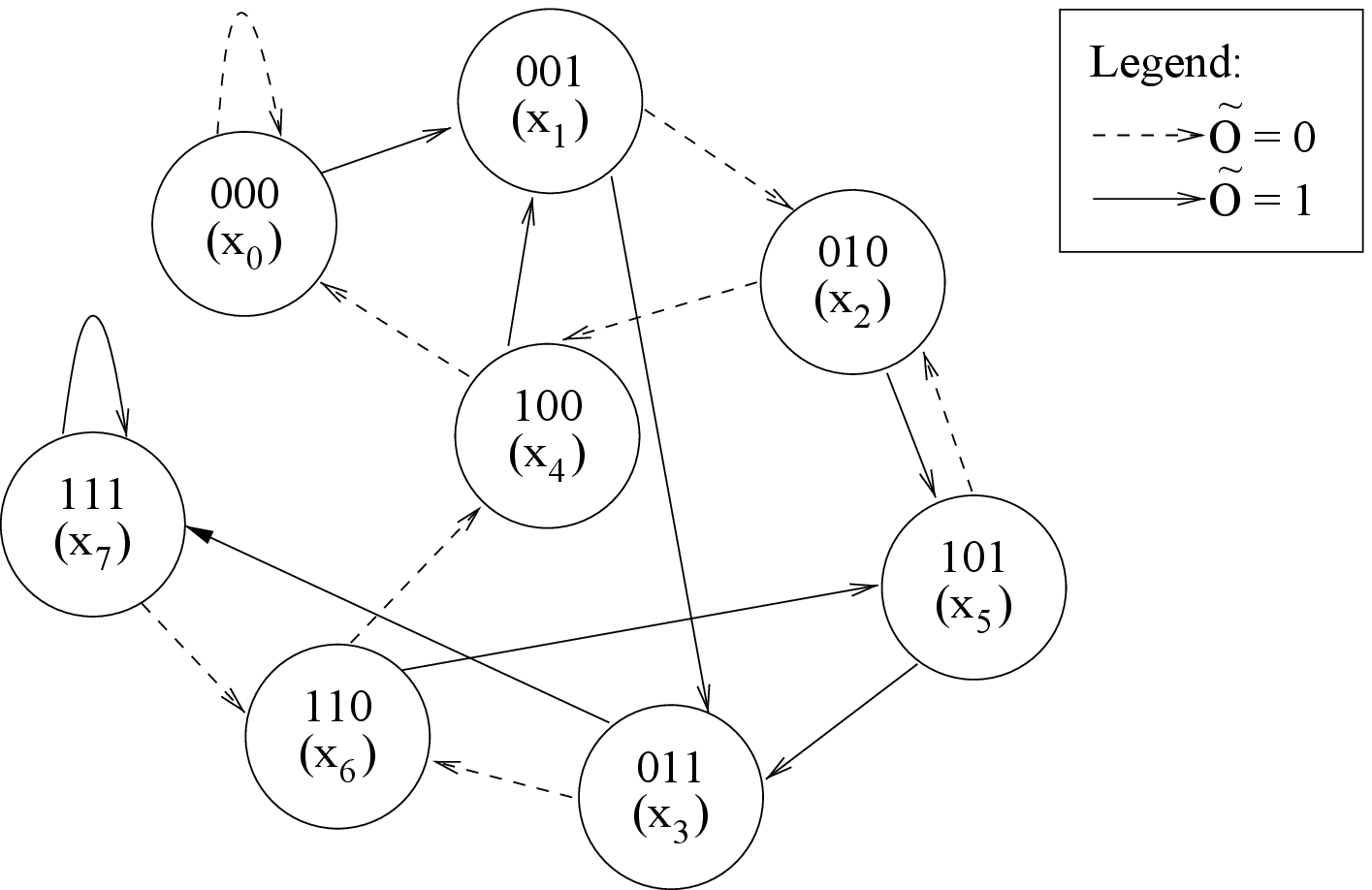}
\end{center}
\caption{De Bruijn example}
\label{redVariableTransitions}

\end{figure}

\smallskip
\noindent {\bf Rewards:\footnote{We use payoffs in the range $[-1,1]$ for ease of presentation. These payoffs can easily be re-scaled to lie in $[0,1]$.}}
\[ 
\Pay\left(\sigma^i,d^i,a^i\right) = \begin{cases}
-1 & \text{if $\hat{O}^{i-1} = 1$ and $d^i \neq 2$ and $\vec{a}^i[2] \neq 3$}; \\
1 & \text{if $d^i \neq 2$ and $d^i = \vec{a}^i[2]$ and $\hat{O}^{i-1} = 0$}; \\
0 & \text{otherwise}.
%0 & \text{if $\hat{O}^{i-1} = 1$ and ($d^i = 2$ or $\vec{a}^i[2] = 3$)}; \\
%0 & \text{if $d^i \neq \vec{a}^i[2]$};\\
%0 & \text{if $d^i = 2$}; \\
% 1 & \text{otherwise}.
\end{cases}
\]
An intuitive interpretation of the reward function is presented in parallel with the adversary strategy. 

\noindent {\bf Adversary Strategy:} The first component of the adversary's action ($\vec{a}[1]$) controls the transitions between variables. The adversary will play the action $\vec{a}^i[2] = 1$ (resp. $\vec{a}^i[2] = 0$) whenever the corresponding variable assignment $x_i=1$ (resp. $x_i = 0$) satisfies the clause that the adversary chose for the current phase. \\

 If neither variable assignment satisfies the clause (if $x_i \notin C$ and $\bar{x}_i \notin C$) then the adversary plays $\vec{a}^i[2] = 2$. This ensures that a defender can only be rewarded during a round if he satisfies the clause $C$, which happens when $d^i = \vec{a}^i[2] = 0 \mbox{ or } 1$.

Notice that whenever $\hat{O} = 1$ there is no way to receive a positive reward. The defender may want the game $G$ to return to a state where $\hat{O} = 0$, but unless the adversary plays the special action $\vec{a}^i[2] = 3$ he is penalized when this happens. The adversary action $\vec{a}^i[2] = 3$ is a special `reset phase' action. By playing $\vec{a}^i[2]= 3$ once at the end of each phase the adversary can ensure that the maximum payoff the defender receives during any phase is $1$. See figure \ref{alg:ObliviousAdversary} for a formal description of the adversary strategy.
%\ref{alg:ObliviousAdversary}

\begin{figure}
\noindent $\bullet$ {\bf Input:} MAX3SAT instance $\phi$, with variables $x_1,\ldots, x_{n-1} $ ,  and clauses $C_1, \ldots , C_\ell $. Random string $R \in \{0,1\}^*$

\noindent$\bullet$    {\bf  De Bruijn sequence: } $s_0,...,s_{n-1}$

\noindent$\bullet$   {\bf Round $t$: } Set $i \leftarrow t \mod n$.

\noindent\quad \; 1.  {\bf Select Clause:}  If $i = 0$ then select a clause $C$ uniformly at random from $C_1,...,C_\ell$ using $R$. 

\noindent\quad \; 2. {\bf Select Move: } 
\[ 
a^i  = \begin{cases}
(s_i,3) & \text{if $i = 0$}; \\
(s_i,1) & \text{if $x_i \in C$};\\
(s_i,0) & \text{if $\bar{x}_i \in C$}; \\
(s_i,2) & \text{otherwise}.
\end{cases}
\]
\caption{Oblivious Adversary: $\Adversary_R$}
\label{alg:ObliviousAdversary}
\end{figure}

%Shrink algorithm space

\noindent  {\bf Analysis: } At a high level, our hardness argument proceeds as follows: 
\begin{enumerate}
\item If there is an assignment that satisfies $(1-\beta)$ fraction of the clauses in $\phi$, 
then there is a fixed strategy that performs well in expectation (see Claim \ref{claim:hardnessFixedStrategy}).
\item  If there a fixed strategy that performs well in expectation, then any $\gamma$-approximate oblivious regret 
minimization algorithm will perform well in expectation (see Claim \ref{claim:hardnessDefenderPerfomance}).
\item  If an efficiently computable strategy $\Defender$ performs well in expectation, then there is an efficiently computable randomized algorithm $S$ to approximate MAX3SAT (see Claim \ref{claim:recoverAssignment}). This would imply that {\sf NP} $=$ {\sf RP}.

\end{enumerate}
\medskip

\newcommand{\claimHardnessFixedStrategy}{Suppose that there is a variable assignment that satisfies $\left(1-\beta\right)\cdot \ell$ of the clauses in $\phi$.  Then there is a fixed strategy $f$ such that $ E_R \left[\AveragePay\left(f,\Adversary_R,G,n \right) \right] \geq \left(1-\beta\right)/n$ ,
where $R$ is used to denote the random coin tosses of the oblivious adversary.}

\begin{claimN} \label{claim:hardnessFixedStrategy}
\claimHardnessFixedStrategy 
\end{claimN}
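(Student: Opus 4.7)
The plan is to construct a fixed strategy $f$ directly from the satisfying assignment $\alpha$, show that in a single phase of length $n$ the strategy collects payoff exactly $+1$ on the clauses satisfied by $\alpha$ (and $0$ on the others), and then take expectation over the adversary's uniformly random clause. Because $\vec{a}^i[1]=s_i$ follows the fixed De~Bruijn sequence, the ``variable component'' $\vec{v}$ of $\sigma^i$ is determined by the position $i$ within a phase; write $x(i)$ for the variable indexed by this bit-string. Define
\[ f(\vec{v},0) := \alpha(x_{\mathrm{bin}(\vec{v})}) \in \{0,1\}, \qquad f(\vec{v},1) := 2, \]
where $\mathrm{bin}(\vec{v})$ interprets $\vec{v}$ as a variable index. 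I assume the initial state has $\hat{O}^{-1}=0$; the reduction is free to pick this.

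The key step is to establish two invariants under play of $f$ against $\Adversary_R$ within a phase: (a) the defender never incurs the $-1$ penalty, and (b) $\hat{O}^{i-1}=1$ iff the defender has already collected a $+1$ reward earlier in the current phase. At $i=0$ the adversary plays $\vec{a}^0[2]=3$, which vacuously excludes the penalty; since $\alpha(x(0))\in\{0,1\}$ differs from both $2$ and $3$, we get $\hat{O}^0=0$ with no reward. For $i\geq 1$ with $\hat{O}^{i-1}=0$, $f$ plays $\alpha(x(i))\neq 2$: the reward rule yields $+1$ exactly when $\alpha(x(i))=\vec{a}^i[2]$, which is the same condition under which $\hat{O}^i$ flips to $1$. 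For $i\geq 1$ with $\hat{O}^{i-1}=1$, $f$ plays $2$, which falsifies the penalty condition and leaves $\hat{O}^i=1$ with zero reward. Combining these, the total phase payoff equals exactly $+1$ if $\alpha$ satisfies the clause $C$ sampled by the adversary (some round $i^\star\geq 1$ has $\alpha(x(i^\star))=\vec{a}^{i^\star}[2]$, producing the single allowed $+1$), and equals exactly $0$ otherwise (no round has $\alpha(x(i))=\vec{a}^i[2]$, so by induction $\hat{O}$ stays $0$ throughout and neither reward nor penalty ever fires).

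Finally, I would take expectation over the adversary's uniform random choice of $C\in\{C_1,\ldots,C_\ell\}$: since $\alpha$ satisfies at least $(1-\beta)\ell$ of the clauses, the expected total phase payoff is at least $(1-\beta)$, and dividing by the $n$ rounds in the phase gives $E_R[\AveragePay(f,\Adversary_R,G,n)] \geq (1-\beta)/n$, as required. The main obstacle I anticipate is precisely the flag-bookkeeping behind invariants (a) and (b): one has to verify carefully that $f$ never accidentally triggers the $-1$ penalty and collects exactly one $+1$ (not zero, not two) per satisfied phase, since a stray penalty or a missed/double reward would destroy the tight $(1-\beta)/n$ bound that later stages of the hardness reduction rely on.
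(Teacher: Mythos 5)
Your construction and analysis follow essentially the same route as the paper's proof: define $f$ on flag-$0$ states by the satisfying assignment, define it on flag-$1$ states so as to dodge the $-1$ penalty, verify the one-reward-per-phase invariant, and average over the adversary's random clause. For the statement as literally written (one phase of $n$ rounds started with $\hat{O}^{-1}=0$) your argument is correct. The one substantive divergence is your choice $f(\vec{v},1):=2$ at \emph{every} state, including the phase-start position $i=0$. The paper instead sets $f(\cdot,1)=0$ at $i=0$: since the adversary plays $\vec{a}^{0}[2]=3$ there, playing $0\neq 2$ incurs no penalty and, crucially, produces $\hat{O}^{0}=0$, resetting the flag for the new phase. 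With your $f$, once a reward is collected in some phase the defender enters the flag-$1$ regime, plays $2$ at the next phase start, thereby regenerates $\hat{O}=1$, and is locked into playing $2$ (payoff $0$) for the remainder of the game. This does not affect the literal $T=n$ statement, but Claim \ref{claim:hardnessDefenderPerfomance} invokes this claim for $T=\mathrm{poly}(n)$ rounds, where the per-phase guarantee must repeat across phases; your $f$ would yield average payoff on the order of $1/T$ there rather than $(1-\beta)/n$, which would break the reduction. You should adopt the paper's phase-start reset (or restate and prove the bound so that it explicitly holds per phase for arbitrarily many phases).
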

%\Full{}{\claimHardnessFixedStrategyFullVersion}

\newcommand{\claimHardnessDefenderPerformance}{Suppose that $\Defender$ is an $\left(\frac{1}{8n} - \frac{3\beta}{n} \right)$-approximate oblivious regret minimization algorithm 
against the class of oblivious adversaries and there is a variable assignment that satisfies $(1-\beta)$ fraction of the clauses in 
$\phi$. Then for $T = poly(n)$
\[E_R \left[\AveragePay\left(\Defender,\Adversary_R,G,T \right) \right] \geq \frac{7}{8n} + \frac{\beta}{n} \ , \]
where $R$ is used to denote the random coin tosses of the oblivious adversary.}
 
\begin{claimN} \label{claim:hardnessDefenderPerfomance}
\claimHardnessDefenderPerformance
\end{claimN}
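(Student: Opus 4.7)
The plan is to apply Definition \ref{def:RegMinAlgApproximate} to the randomized oblivious adversary $\Adversary_R$, using Claim \ref{claim:hardnessFixedStrategy} to lower-bound the best fixed expert. First I would invoke Claim \ref{claim:hardnessFixedStrategy} to obtain a fixed strategy $f^\star \in \FixedDefender$ with $E_R[\AveragePay(f^\star,\Adversary_R,G,n)] \geq (1-\beta)/n$, then extend this bound to all $T = \mathrm{poly}(n)$ divisible by $n$. This extension is routine given the structure of $\Adversary_R$: the adversary proceeds in independent phases of length $n$, and in each phase it draws a fresh uniform clause $C$ and retraces the same De Bruijn Hamiltonian cycle over variable-states. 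Consequently the expected per-round payoff of any fixed strategy is stationary across phases, and the single-phase bound immediately yields the multi-phase one.

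Next I would view $\Adversary_R$ (with $R$ regarded as internal coin tosses) as a single randomized oblivious adversary; this is permitted since Definition \ref{def:RegMinAlgApproximate} quantifies over all oblivious $\Adversary$, and under this interpretation $\AveragePay(f,\Adversary_R,G,T)$ coincides with the outer expectation $E_R[\AveragePay(f,\Adversary_R,G,T)]$ for every $f$. Applying Definition \ref{def:RegMinAlgApproximate} with $\epsilon = \beta/n$, for all sufficiently large $T$ (still polynomial in $n$) we obtain
\[
\AveragePay(\Defender,\Adversary_R,G,T) \;\geq\; \max_{f \in \FixedDefender} \AveragePay(f,\Adversary_R,G,T) - \epsilon - \gamma \;\geq\; \AveragePay(f^\star,\Adversary_R,G,T) - \epsilon - \gamma.
\]
Substituting $\gamma = \tfrac{1}{8n} - \tfrac{3\beta}{n}$, $\epsilon = \tfrac{\beta}{n}$, and the bound $\AveragePay(f^\star,\Adversary_R,G,T) \geq (1-\beta)/n$ from the first step, the right-hand side becomes $(1-\beta)/n - \beta/n - 1/(8n) + 3\beta/n = 7/(8n) + \beta/n$, which is exactly the inequality in the claim.

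The arithmetic is essentially mechanical; the care required lies in the two viewpoint shifts. The first---extending Claim \ref{claim:hardnessFixedStrategy} from horizon $n$ to polynomial horizon---is immediate from the phase structure of $G$ and $\Adversary_R$ but should be noted explicitly. The second---applying the guarantee of Definition \ref{def:RegMinAlgApproximate} against the randomized $\Adversary_R$---is legal because Definition \ref{def:RegMinAlgApproximate} ranges over all oblivious adversaries, including randomized ones. A convenient feature of the argument is that we do not need the $\max$ over fixed experts to commute with $E_R$: we only exhibit a single $f^\star$ attaining the target in expectation over $R$, and then use $\max_{f} \AveragePay(f,\Adversary_R,G,T) \geq \AveragePay(f^\star,\Adversary_R,G,T)$, sidestepping any Jensen-type slack.
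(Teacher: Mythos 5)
Your proposal is correct and follows essentially the same route as the paper: invoke Claim \ref{claim:hardnessFixedStrategy} for a single good fixed strategy, apply Definition \ref{def:RegMinAlgApproximate} with $\epsilon = \beta/n$, and rearrange the arithmetic. The only notable difference is that the paper applies the regret guarantee separately to each \emph{deterministic} adversary $\Adversary_R$ (for each fixed coin string $R$) and only then takes the expectation over $R$, whereas you apply it once to $\Adversary_R$ viewed as a randomized oblivious adversary --- the paper's order of quantification sidesteps the question of whether its (deterministic-function) adversary model formally includes randomized strategies, but both yield the same bound.
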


\newcommand{\claimRecoverAssignment}{Fix a polynomial $p(\cdot)$ and let
$\alpha = n \cdot E_R \left[\AveragePay\left(\Defender,\Adversary_R,G,T \right) \right]$  ,
where $T = p(n)$ and $\Defender$ is any polynomial time computable strategy. There is a polynomial time randomized algorithm $S$ which 
satisfies $\alpha$ fraction of the clauses from $\phi$ in expectation.  }%\Full{}{\proofClaimHardnessDefenderPerformance}

\begin{claimN} \label{claim:recoverAssignment}
\claimRecoverAssignment
\end{claimN}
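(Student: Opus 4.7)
The plan is to have $S$ simulate the game. On input $\phi$, $S$ constructs the bounded-memory game $G$, samples random bits $R$, runs $\Defender$ against $\Adversary_R$ for $T = p(n)$ rounds, partitions the rounds into the $T/n$ phases of length $n$ induced by $\Adversary_R$'s De Bruijn sequence, picks a phase $j$ uniformly at random, and outputs the assignment $\vec{x}^j$, where $\vec{x}^j(i)$ is $\Defender$'s action at the round of phase $j$ that visits the state for variable $x_i$ (with the action $2$ rewritten to $0$). Since $\Defender$ and $\Adversary_R$ are polynomial time and $T=p(n)$, the simulation and assignment extraction run in polynomial time.

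The analytic heart is the per-phase inequality
\[
Y_j \;\le\; \mathbf{1}\!\left[\vec{x}^j \text{ satisfies } C_j\right],
\]
where $Y_j$ is the total reward $\Defender$ collects during phase $j$ and $C_j$ is the clause $\Adversary_R$ drew at the start of phase $j$. Two observations drive this bound. First, $Y_j \le 1$: each $+1$ flips $\hat{O}$ from $0$ to $1$, and the only intra-phase route from $\hat{O}=1$ back to $\hat{O}=0$ requires playing $d^t \notin \{2,\vec{a}^t[2]\}$ while $\vec{a}^t[2]\neq 3$ (true at every non-reset round), which incurs a $-1$; hence each additional $+1$ within the phase is cancelled by a preceding $-1$. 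Second, any $+1$ in phase $j$ arises from a round with $d^t = \vec{a}^t[2] \in \{0,1\}$ at some variable $x_i$, and by the construction of $\Adversary_R$ this means $d^t$ is the literal of $x_i$ appearing in $C_j$; hence $\vec{x}^j$ satisfies $C_j$. When $Y_j \le 0$ the inequality is trivial.

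Linearity of expectation together with the fact that $\Adversary_R$ draws each $C_j$ uniformly from $C_1,\ldots,C_\ell$ then gives
\[
\alpha \;=\; n\cdot E_R\!\left[\AveragePay(\Defender,\Adversary_R,G,T)\right] \;=\; \frac{1}{T/n}\sum_{j=1}^{T/n} E[Y_j] \;\le\; \frac{1}{T/n}\sum_{j=1}^{T/n} \Pr\!\left[\vec{x}^j \text{ sat } C_j\right].
\]
Since $S$ outputs $\vec{x}^j$ for a uniformly chosen phase while $C_j$ is drawn uniformly from the $\ell$ clauses of $\phi$ at the start of that phase, the right-hand side is exactly the expected fraction of $\phi$'s clauses satisfied by the assignment $S(\phi)$ returns. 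Therefore $S$ satisfies at least $\alpha$ fraction of the clauses of $\phi$ in expectation.

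The main obstacle is the per-phase bound $Y_j \le 1$; it requires a careful case analysis of the $\hat{O}$ dynamics to confirm that the reset round (the unique round with $\vec{a}^t[2]=3$) is the only zero-cost way to flip $\hat{O}$ from $1$ to $0$, and that this reset lies at the boundary between phases rather than inside phase $j$ itself. Once this bookkeeping is pinned down, the remaining averaging argument and the translation to a MAX3SAT approximation are routine.
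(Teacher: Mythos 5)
Your per-phase bookkeeping is fine --- $Y_j \le 1$ and ``a $+1$ forces the played literal into $C_j$'' are both correct and match the paper --- but the step you wave through at the end is the one that actually fails. The identity between $\frac{1}{T/n}\sum_j \Pr[\vec{x}^j \text{ sat } C_j]$ and the expected fraction of \emph{all} clauses of $\phi$ satisfied by the output requires $\vec{x}^j$ to be independent of $C_j$, and in the real game it is not. Although the game has imperfect information, the defender observes $\hat{O}^t$, and when $\hat{O}^{t-1}=0$ and $d^t\in\{0,1\}$ this outcome equals $\mathbf{1}[d^t=\vec{a}^t[2]]$, i.e.\ it reveals whether the literal just played lies in the clause the adversary drew. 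A defender can exploit this leak: take $\phi$ with clauses $C_i=(x_i\vee x_i\vee x_i)$ and the strategy ``play $1$ at each variable until the first reward, then play $0$ thereafter.'' This defender earns reward $1$ in every phase (so $\alpha=1$) and always satisfies the drawn clause, yet the assignment extracted from a random phase satisfies only about half the clauses of $\phi$ in expectation. So the inequality chain you wrote is true, but its right-hand side is \emph{not} the performance of your $S$, and the conclusion $E[\mathbf{Y}]\ge\alpha$ is false for this defender. You also flag the wrong step as the main obstacle: $Y_j\le 1$ is routine; the independence of the extracted assignment from the drawn clause is the crux.

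The paper's proof closes exactly this hole with a \emph{false-history} simulation: $S$ does not extract the assignment from the real play, but instead feeds $\Defender$ a fabricated history $H'$ in which the $\hat{O}$ component is always reported as $0$, so the simulated defender receives no feedback correlated with $C$ and the extracted assignment is genuinely independent of the drawn clause. One then argues that the real-game per-phase reward $\mathbf{X}_j\in\{0,1\}$ satisfies $E[\mathbf{X}_j]\le\Pr[\text{blinded assignment satisfies a uniform clause}]=E[\mathbf{Y}_j]$, since the real and blinded runs coincide up to the first reward, and only at that point does the uniform-clause averaging convert $\Pr[\text{sat } C_j]$ into an expected satisfied fraction over all of $\phi$. (The paper also returns the best assignment over all phases rather than a uniformly random one, but that difference is cosmetic; the blinding is the essential ingredient your proposal is missing.)
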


The proofs of these claims can be found in Appendix \ref{apdx:Hardness}. \\
\begin{proofof}{Theorem \ref{thm:hardness}}
The key point is that if an algorithm $S$ runs in time $O\left(p(n)\right)$ on instances of size $n^\beta$ for some polynomial $p(n)$ then on instances of size $n$ $S$ runs in time $O\left(p\left(n^{1/\beta} \right) \right)$ which is still polynomial time. Unless {\sf NP} $=$ {\sf RP} $\forall \epsilon, \beta > 0$ and every algorithm $S$ running in time poly(n), there exists an integer $n$ and a MAX3SAT formula $\phi$ with $n^\beta$ variables such that 
\begin{enumerate}
\item There is an assignment satisfying at least $(1-\epsilon)$ of the clauses in $\phi$.
\item The expected fraction of clauses in $\phi$ satisfied by $S$ is  $\leq \frac{7}{8} + \epsilon$. 
\end{enumerate}
If we reduce from a MAX3SAT instance with $n^\beta$ variables we can construct a game with $O(n)$ states ($n^{1-\beta} $ copies of each variable). One Hamiltonian cycle would now corresponds to $n^{1-\beta} $ phases of the game. This means that the expected average reward of the optimal fixed strategy is at least
\[\max_{f \in \FixedDefender} E_R\left[ \AveragePay\left(f, \Adversary_R, G,T \right)\right] \geq \frac{n^{1-\beta} \left(1-\epsilon\right)}{n}  \ ,\]
while the expected average reward of an efficient defender strategy $\Defender$ is at most
\[   E_R\left[ \AveragePay\left(\Defender, \Adversary_R, G,T \right)\right] \leq \frac{n^{1-\beta} \left(\frac{7}{8}+\epsilon\right)}{n} \ .\]
Therefore, the expected average regret is at least
\[\AverageRegret{0}\left(\Defender, \Adversary_R, G,T,\FixedDefender \right) \geq \left(\frac{1}{8}-2\epsilon \right)n^{-\beta} \ . \]
\end{proofof}

 The proof of theorem \ref{thm:hardnessExpTimeHyp} is similar to the proof of theorem \ref{thm:hardness}. It can be found in Appendix \ref{apdx:Hardness}.

\newcommand{\proofofLemmaHardness}{
\begin{proofof}{Lemma~\ref{lemma:hardness}}
Suppose that $\Defender$ were an efficient $\gamma$-approximate oblivious regret minimization algorithm and consider the 
polynomial time randomized algorithm $S$.  Combining Claim \ref{claim:hardnessDefenderPerfomance} and Claim
\ref{claim:recoverAssignment}, for every MAX3SAT formula $\phi$ with $\geq (1-\beta)$ fraction of the clauses 
satisfiable $S$ satisfies $\geq \frac{7}{8} + \beta$ fraction of the clauses from $\phi$ in expectation.  
This would imply that {\sf NP} $=$ {\sf RP} \cite{hastad2001some}.
\end{proofof}}

%\Full{}{The proofs of Theorems~\ref{thm:hardness} and \ref{thm:hardnessExpTimeHyp} are both quite similar to Lemma~\ref{lemma:hardness}. \proofTheoremHardness}

Our hardness reduction is similar to a result from Even-Dar, et al.,                       
\cite{even2005experts}. They consider regret minimization in a Markov   
Decision Process where the adversary controls the transition model.     
Their game is not a bounded-memory game; in particular it does not      
satisfy our \emph{independent outcomes} condition. The current state in their game can depend    
on the last $n$ actions. In contrast, we consider bounded-memory games  
with $m = O\left(\log n\right)$, so that the current state only depends 
on the last $m$ actions. This makes it much more challenging to enforce 
guarantees such as ``the defender can only receive a reward once in     
each window of $n$ rounds''---a property that is used in the hardness   
proof. The adversary is oblivious so she will not remember this fact,   
and the game itself cannot record whether a reward was given $m+1$      
rounds ago. We circumvented this problem by designing a payoff function 
in which the defender is penalized for allowing the game to ``forget''  
when the last reward was given, thus effectively enforcing the desired  
property.                                                               

%\Full{}{\input{colt_imposibility}}

%
% $Id: colt_reduction.tex 1679 2012-02-15 21:56:56Z jblocki $
% 
\section{Regret Minimization Algorithms} \label{sec:algorithm}
In section \ref{subsec:ReductionToRepeatedGames} we present a reduction from bounded-memory games to repeated games. This reduction can be used to create a $k$-adaptive regret minimizing algorithm (Theorem \ref{thm:memoryDTIRRegretMin}). This is significant because there is no $k$-adaptive regret minimization algorithm for the general class of stochastic games. A consequence of Theorem \ref{thm:hardness} is that when the expert set includes all fixed strategies $\FixedDefender$ we cannot hope for an efficient algorithm unless {\sf NP} $=$ {\sf RP}. In section \ref{subsec:efficient} we present an efficient {\em approximate} $0$-adaptive regret minimization algorithm for bounded-memory games of perfect information. The algorithm uses an implicit weight representation to efficiently sample the experts and update their weights. Finaly, we show how this algorithm can be adapted to obtain an efficient approximate $0$-adaptive regret minimization algorithm for bounded-memory games of {\em imperfect} information.

\subsection{Reduction to Repeated Games} \label{subsec:ReductionToRepeatedGames}
All of our regret minimization algorithms work by first reducing the bounded-memory game $G$ to a repeated game $\RepeatedGame{G}{K}$. One round of the repeated game $\RepeatedGame{G}{K}$ corresponds to $K$ rounds of $G$. Before each round of $\RepeatedGame{G}{K}$ both players commit to an adaptive strategy. In $\RepeatedGame{G}{K}$ the reward that the defender gets for playing a strategy $f \in \KAdaptiveDefenderSet$ is the reward that the defender would have received for using the strategy $f$ for the next $K$ rounds of the actual game $G$ if the initial state were $\sigma_0$: $\Pay\left(f,g,\RepeatedGame{G}{K}\right) = \Pay\left(f,g, G, \sigma_0, K\right)$.

The rewards in $\RepeatedGame{G}{K}$ may be different from the actual rewards in $G$ because the initial state before each $K$ rounds might not be $\sigma_0$. Claim \ref{claim:LossDifferences} bounds the difference between the hypothetical losses from $\RepeatedGame{G}{K}$ and actual losses in $G$ using the bounded-memory property. The proof of Claim \ref{claim:LossDifferences} is in Appendix \ref{apdx:RegretMinAlg}.

\newcommand{\claimLossDifferences}{For any adaptive defender strategy $f \in \KAdaptiveDefenderSet$ and any adaptive adversary strategy $g \in \KAdaptiveAdversarySet{K}$ and any state $\sigma$ of $G$ we have 
$\left| \Pay\left(f,g, G , \sigma, K \right) - \Pay\left(f,g, G, \sigma_0, K\right) \right| \leq m $ . 
 }

\begin{claimN} \label{claim:LossDifferences}
\claimLossDifferences
\end{claimN}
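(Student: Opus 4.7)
The plan is to couple the two $K$-round trajectories of $G$, one started in state $\sigma$ and the other in $\sigma_0$, so that per-round payoffs can be compared pointwise. I would first feed both executions identical randomness --- the same coin tosses for the transition function $\tau$ and for any randomization inside $f$ and $g$. Under this coupling, my goal is to show that the payoff at every round $i > m$ is identical in the two executions, so the only discrepancies live in rounds $1,\ldots,m$, each contributing at most $1$ to the difference since $\Pay$ takes values in $[0,1]$.

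Concretely, I would proceed by induction on $i \in \{1,\ldots,K\}$ to prove that under the coupling the actions $d^i, a^i$ and the outcome $O^i$ match in the two trajectories. The base case $i=1$ uses the fact that both strategies are $K$-adaptive: because $t \bmod (K+1) = 0$ at $t=0$, the functions $f$ and $g$ are invoked on the empty history, so their round-$1$ actions do not see $\sigma$ or $\sigma_0$ at all. For the inductive step, note that throughout these $K$ rounds the modular index $t \bmod (K+1) = t$ never wraps back to $0$, so $f$ and $g$ consult only $O^1,\ldots,O^{i-1}$, which by the induction hypothesis agree in the coupled executions; hence $d^i, a^i$ agree. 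The independent-outcomes property then guarantees that $O^i$ is the same (randomized) function of $(d^i,a^i)$ and the shared transition coin, so the coupling propagates. Once this is established, I would exploit the bounded-memory structure $\sigma^i = (O^{i-1},\ldots,O^{i-m})$: for every $i > m$ all $m$ components come from rounds $1,\ldots,i-1$ and therefore agree in the two executions, giving equal payoffs. For $i \leq m$ some components of $\sigma^i$ are inherited from the differing initial states, but the per-round discrepancy is still at most $1$. Summing over $i = 1,\ldots,K$ and taking expectations yields $\left|\Pay(f,g,G,\sigma,K)-\Pay(f,g,G,\sigma_0,K)\right| \leq m$.

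The main obstacle is really a careful bookkeeping issue: verifying that the $K$-adaptive window of the strategies aligns with the $K$-round horizon of the reduction, so that neither $f$ nor $g$ ever peeks at history outside the window and thus becomes sensitive to the initial state. Once this alignment is confirmed, the coupling completely decouples the strategy-driven quantities (actions and outcomes) from the initial state, and the bounded-memory property of $\Sigma = \mathcal{O}^m$ confines the initial-state influence to the first $m$ rounds, which together give the claimed bound.
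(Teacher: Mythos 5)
Your proof is correct and follows essentially the same route as the paper's: fix the randomness so that actions and outcomes coincide in the two executions, observe that the bounded-memory state $\sigma^i = (O^{i-1},\ldots,O^{i-m})$ therefore agrees once $i > m$, and charge at most $1$ per round to the first $m$ rounds. You are merely more explicit than the paper's sketch about the induction showing that the $K$-adaptive strategies never consult history outside the window, a step the paper asserts without proof.
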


The key idea behind our $k$-adaptive regret minimization algorithm  $\BndMemRegMin$ is  
to reduce the original bounded-memory game to a repeated game $\RepeatedGame{G}{K}$ of      
imperfect information ($K \equiv 0 \mod{k}$). In particular we obtain the regret bound in Theorem \ref{thm:memoryDTIRRegretMin}. Details and proofs can be found in Appendix \ref{apdx:RegretMinAlg}.

\newcommand{\ImpInfoRegMinAlgThm}{Let $G$ be any bounded-memory-$m$ game with $n$ states and let $\Adversary$ be any adversary strategy. After playing $T$ rounds of $G$ against $\Adversary$,  $\BndMemRegMin\left(G,K\right)$ achieves regret bound \begin{eqnarray*}
 \AverageRegret{k}\left(\BndMemRegMin, \Adversary,G,T,S\right) & < &  \frac{m}{T^{1/4}} +  4\frac{\sqrt{  N \log N}}{T^{1/4}} \ ,
\end{eqnarray*}
where $N = \left|S\right|$ is the number of experts, $\Adversary$ is the adversary strategy and $K$ has been chosen  so that $K =  T^{1/4}$ and $K \equiv 0 \mod{k}$.}
\begin{theorem} \label{thm:memoryDTIRRegretMin} 
\ImpInfoRegMinAlgThm
\end{theorem}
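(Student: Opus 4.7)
The plan is to prove Theorem \ref{thm:memoryDTIRRegretMin} by decomposing the $k$-adaptive regret of $\BndMemRegMin$ into two contributions: the discrepancy between the actual per-round payoffs in the bounded-memory game $G$ and the ``reset-state'' payoffs used in the reduced repeated game $\RepeatedGame{G}{K}$, plus the regret incurred by the no-regret subroutine that plays $\RepeatedGame{G}{K}$ in the inner loop. First I would partition the $T$ rounds of $G$ into $T' = T/K$ consecutive super-rounds of length $K$. Because $K \equiv 0 \pmod{k}$, the $k$-period adaptation windows of the hypothetical adversary $\KAdaptiveAdversary{k}$ align with the super-round boundaries, so each super-round is a self-contained $K$-round interaction in which the adversary acts according to some $K$-adaptive strategy $g_t \in \KAdaptiveAdversarySet{K}$ whose behavior inside the super-round is determined entirely by the defender's actions during that super-round. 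This lets me view each super-round of $G$ as a single round of $\RepeatedGame{G}{K}$ in which $\BndMemRegMin$ commits to an expert $f_t \in S$ and the counterfactual reward for any fixed expert $f \in S$ is $\Pay(f, g_t, G, \sigma_t, K)$.

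Next I would invoke Claim \ref{claim:LossDifferences} once on the algorithm's trajectory and once on each comparison expert, replacing the true starting state $\sigma_t$ of each super-round with the canonical state $\sigma_0$ at a cost of at most $m$ per super-round and per strategy. Summing this discrepancy over the $T' = T/K$ super-rounds and dividing by $T$ contributes an average per-round term of at most $m/K$ to the regret, which once the stated choice $K = T^{1/4}$ is plugged in is exactly the $m/T^{1/4}$ term of the bound. After this reset, the per-super-round rewards are literally the rewards of $\RepeatedGame{G}{K}$, so the remaining piece of the regret equals the regret of the subroutine embedded in $\BndMemRegMin$ when it plays $T'$ rounds of $\RepeatedGame{G}{K}$ against the sequence $g_1, \ldots, g_{T'}$.

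To handle this second piece I would invoke the guarantee of the no-regret subroutine appropriate for $\RepeatedGame{G}{K}$, which is a repeated game of imperfect information with $N = |S|$ arms: an Exp3-style bandit algorithm achieves expected regret at most $c\sqrt{T' N \log N}$ after normalizing the per-super-round rewards to $[0,1]$. Translating back to the per-round scale of $G$ and averaging over $T$ rounds, this contributes a term of order $\sqrt{N \log N \cdot K / T}$, which under the choice $K = T^{1/4}$ and a careful accounting of the leading constants yields the claimed $4\sqrt{N \log N}/T^{1/4}$. The main technical obstacle I expect is the decoupling step: rigorously verifying that the alignment $K \equiv 0 \pmod{k}$ truly makes each super-round's opponent $g_t$ behave as a single fixed opponent of $\RepeatedGame{G}{K}$ so that the Exp3 regret guarantee applies verbatim to the sequence $g_1, \ldots, g_{T'}$; once that is established, the remainder of the argument is routine bookkeeping using Claim \ref{claim:LossDifferences} and the standard multiplicative-weights analysis.
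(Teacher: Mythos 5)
Your proposal takes essentially the same route as the paper's proof: the paper likewise treats each block of $K$ rounds of $G$ as one round of the repeated game $\RepeatedGame{G}{K}$, applies the standard bandit (Exp3) regret bound of \cite{auer1995gambling} with $N$ experts and per-round payoffs bounded by $K$ over $T/K$ rounds to obtain the $4\sqrt{N\log N}/T^{1/4}$ term, and uses the alignment $K \equiv 0 \pmod{k}$ together with Claim \ref{claim:LossDifferences} to charge the additional modeling loss of $m/K = m/T^{1/4}$ per round. Your decomposition and bookkeeping are correct and add nothing that contradicts or departs from the paper's (sketched) argument.
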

Intuitively, the $m/T^{1/4} = m/K$ term is due to modeling loss from Claim \ref{claim:LossDifferences} and the other term comes from the standard regret bound of \cite{auer1995gambling}.

\subsection{Efficient Approximate Regret Minimization Algorithms} \label{subsec:efficient}
In this section we present $\RegMinPerfInfo$ (Efficient approXimate Bounded Memory Weighted Majority), an efficient algorithm to approximately minimize  regret against an oblivious adversary in bounded-memory games with perfect information. The set of experts $\PerfectInfoExpertSet$ used by our algorithms contains the fixed strategies $\FixedDefender$ as well as all $K$-adaptive strategies $\KAdaptiveDefenderSet$ ($K=m/\gamma$).  We prove the following theorem

\newcommand{\RegMinPerfInfoThm}{Let $G$ be any bounded-memory-$m$ game  of perfect information with $n$ states and let $\Adversary$ be any adversary strategy. Playing $T$ rounds of $G$ against $\Adversary$, $\RegMinPerfInfo$ runs in total time $Tn^{O\left(1/\gamma\right)}$ and achieves regret bound
\[\AverageRegret{0}\left(\RegMinPerfInfo, \Adversary,G, T,\PerfectInfoExpertSet \right) \leq \gamma + O\left(\frac{m}{\gamma}\sqrt{\frac{\frac{m}{\gamma} n \log \left(N   \right)}{T}}\right) \ , \]
where $K$ has been set to $m/\gamma$ and $N = \left|\KAdaptiveDefenderSet \right| =  \left(\left|\DefenderActions\right|\right)^{n^{1/\gamma}}$ is the number of $K$-adaptive strategies.}

\begin{theorem}\label{thm:PerfectInformation}
\RegMinPerfInfoThm
\end{theorem}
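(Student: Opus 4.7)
The plan is to combine three ingredients: (i) the reduction of the bounded-memory game $G$ to the repeated game $\RepeatedGame{G}{K}$ from Section~\ref{subsec:ReductionToRepeatedGames} with $K = m/\gamma$; (ii) a multiplicative-weights (Hedge) algorithm run on $\RepeatedGame{G}{K}$ using the expert set $\PerfectInfoExpertSet$; and (iii) an implicit weight representation that makes step (ii) run in time $n^{O(1/\gamma)}$ rather than $\mathrm{poly}(N)$.

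For (i), Claim~\ref{claim:LossDifferences} says that replacing the true state with $\sigma_0$ at the start of each meta-round costs at most $m$ in cumulative reward per meta-round, which averages to $m/K = \gamma$ over the $T/K$ meta-rounds and produces the additive $\gamma$ in the bound. For (ii), since per-meta-round rewards in $\RepeatedGame{G}{K}$ lie in $[0,K]$, standard Hedge (e.g.\ \cite{auer1995gambling}) gives cumulative regret $O\!\left(K\sqrt{(T/K)\log N}\right)$ against the best expert in $\PerfectInfoExpertSet$; once divided by $T$ this matches the second additive term in the claimed bound up to the polynomial factors in $K = m/\gamma$ absorbed by the big-$O$.

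The heart of the proof is step (iii). Each $K$-adaptive expert $f$ is specified by its action $f(i,H_i)$ at every pair consisting of a window-position $i \in \{0,\ldots,K\}$ and a within-window outcome-history $H_i \in \OutcomeSet^{i}$; the total number of such pairs is $\sum_{i=0}^{K}|\OutcomeSet|^{i} = O(n^{1/\gamma})$, so I would maintain a single weight $w_{(i,H_i),a}$ for each (pair, action) triple and declare the weight of $f$ to be the product $W(f) = \prod_{(i,H_i)} w_{(i,H_i),f(i,H_i)}$. Sampling $f$ proportional to $W(f)$ then factorises into independent local draws at each (position, history) node and takes time $n^{O(1/\gamma)}$. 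After a meta-round, perfect information reveals the adversary's actions and all observed outcomes, so for every node actually visited during the meta-round and for every candidate action $a$ the algorithm can compute the per-round reward contribution of $a$ and apply the local Hedge update $w_{(i,H_i),a} \leftarrow w_{(i,H_i),a}\cdot e^{\eta\, r_{(i,H_i),a}}$. Because the total meta-round reward decomposes additively across the $K$ visited nodes, these local updates preserve the invariant that $W(f)$ is proportional to $\exp(\eta\cdot\text{(cumulative reward of }f))$.

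The main obstacle, as highlighted in Section~\ref{sec:related}, is that these experts are \emph{adaptive}: which node $f$ visits at position $i$ depends on which actions $f$ chose at earlier positions, so the path-expert factorisation of Takimoto--Warmuth does not apply verbatim. I would resolve this by proving a factorisation lemma: because each unvisited weight $w_{(i,H_i),a}$ contributes the same multiplicative factor to every expert whose restriction at visited nodes is fixed, those factors cancel in the Hedge normalisation and the marginal distribution induced by $W$ on $\PerfectInfoExpertSet$ is exactly the Hedge distribution after the meta-round's update. Once this lemma is in hand, combining the Hedge regret bound of (ii) with the $\gamma$ modeling loss of (i), and noting that the implicit representation carries only $n^{O(1/\gamma)}$ weights, yields both the running time and the claimed regret bound.
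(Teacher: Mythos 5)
Your overall architecture --- reduce $G$ to $\RepeatedGame{G}{K}$ with $K=m/\gamma$, absorb the additive $\gamma$ as modeling loss via Claim~\ref{claim:LossDifferences}, and simulate weighted majority over an exponentially large expert set through a product-form implicit weight representation with factorized sampling --- is exactly the paper's. The gap is in the implicit representation itself. You index local weights by pairs $(i,H_i)$ with $H_i\in\OutcomeSet^i$ a within-window \emph{outcome} history. But outcomes in a bounded-memory game depend on the defender's action as well as the adversary's, so the probability that an expert $f$ ever reaches the node $(i,H_i)$ is $\prod_{j<i}\Pr\left[O^{j+1}\mid f(j,H_j),a^{j+1}\right]$, a quantity depending on $f$'s choices at all ancestor nodes. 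Consequently the counterfactual payoff of $f$ in a meta-round is a sum of \emph{products} of terms each involving $f$ at several nodes; it is not of the additive form $\sum_{(i,H_i)}r_{(i,H_i),f(i,H_i)}$ with node-local rewards independent of $f$'s other choices. Your invariant $W(f)\propto\exp\left(\eta\cdot\text{cumulative reward of }f\right)$ therefore cannot be maintained by local updates on these weights, and the proposed cancellation lemma does not rescue it: ``visited'' is itself expert-dependent, and two experts agreeing at every node either of them actually reaches can still accumulate different products of your local weights. This is precisely why the paper's traces $p=\sigma,d^1,O^1,\ldots,d^{i-1},O^{i-1},d^i$ record the defender's intermediate actions: with the $d^j$ inside the node identity, the reachability factor $\ReachableTrace{\vec{a},\sigma',p}$ is a function of the trace alone, the payoff decomposes additively over $\Consistent{E}$ (Claim~\ref{claim:PerfInfoClaim}), and the set $\ConsistentAll$ still has size $n^{O(1/\gamma)}$.

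A second, smaller omission: your nodes also drop the meta-round's starting state $\sigma$. An expert in $\PerfectInfoExpertSet$ is a family $\{f_\sigma:\sigma\in\Sigma\}$, one $K$-adaptive strategy per current state, and this is what makes $\FixedDefender\subseteq\PerfectInfoExpertSet$ (a fixed strategy's behaviour in the first $m$ rounds of a window depends on the state it starts from). Without $\sigma$ your comparison class is only $\KAdaptiveDefenderSet$, which need not contain the fixed strategies, so you would prove a weaker statement than the theorem (and you would lose the factor $n$ inside the logarithm, which is there because $\left|\PerfectInfoExpertSet\right|=N^n$). Both issues are repaired by adopting the paper's trace-indexed weights; the remainder of your argument then goes through essentially as written.
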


In particular, for any constant $\gamma$ there is an efficient $\gamma$-approximate $0$-adaptive regret minimization algorithm for bounded-memory games of perfect information. We can adapt this algorithm to get $\RegMinImperfInfo$ (Efficient approXimate Bounded Memory Weighted Majority for Imperfect Information Games), an efficient approximate 0-adaptive regret minimization algorithm for games of imperfect information using a sampling strategy described in Appendix \ref{apdx:RegretMinAlg}. 

\begin{theorem}\label{thm:ImperfectInformationApproximateRegret}
Let $G$ be any bounded-memory-$m$ game of imperfect information with $n$ states and let $\Adversary$ be any adversary strategy. There is an algorithm $\RegMinImperfInfo$ that runs in total time $Tn^{O\left(1/\gamma\right)}$ playing $T$ rounds of $G$ against $\Adversary$, and achieves regret bound
\[\AverageRegret{0}\left(\RegMinImperfInfo, \Adversary,G, T, \PerfectInfoExpertSet \right) \leq 2\gamma + O\left(\frac{m n^{1/\gamma}}{\gamma^2}\sqrt{\frac{\frac{mn^{1/\gamma}}{\gamma} n \log \left(N   \right)}{T}}\right) \ . \]
where $K$ has been set to $m/\gamma$ and $N = \left|\KAdaptiveDefenderSet \right| =  \left(\left|\DefenderActions\right|\right)^{n^{1/\gamma}}$ is the number of $K$-adaptive strategies.
\end{theorem}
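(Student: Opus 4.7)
The plan is to build $\RegMinImperfInfo$ by adapting $\RegMinPerfInfo$ (Theorem~\ref{thm:PerfectInformation}) to compensate for not observing the adversary's action $a^t$, using an exploration--exploitation schedule on top of the existing algorithm. I would keep the reduction to $\RepeatedGame{G}{K}$ with $K=m/\gamma$, the expert set $\PerfectInfoExpertSet \supseteq \KAdaptiveDefenderSet$, and the per-state factorization that makes $\RegMinPerfInfo$ efficient: since a $K$-adaptive expert is determined by its choice at each of at most $|\OutcomeSet|^K \leq n^{1/\gamma}$ local histories, the implicit weight representation stores one weight per (history, action) pair and can be sampled from and updated in time $n^{O(1/\gamma)}$.

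Concretely, at the start of each meta-round of $\RepeatedGame{G}{K}$ I would toss a coin with bias $\eta = \gamma$; on heads (an exploration meta-round) the defender plays each of the $K$ underlying rounds with a uniformly random action from $\DefenderActions$, and on tails (an exploitation meta-round) she plays the expert drawn from the current implicit weight distribution as in $\RegMinPerfInfo$. Exploration lets the defender build, for each visited local history and each action, an unbiased importance-weighted estimate of the reward an expert making that choice would have received; because a $K$-adaptive expert's meta-round reward decomposes as an additive sum over the local histories it visits, these per-(history, action) estimates combine into an unbiased expert-level estimator that drives the multiplicative weight update in $n^{O(1/\gamma)}$ time per meta-round, without ever enumerating the $N=(|\DefenderActions|)^{n^{1/\gamma}}$ experts.

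For the regret analysis I would decompose the bound into three contributions. The exploration meta-rounds cost at most $\eta = \gamma$ per round no matter what the defender plays, and together with the $m/K = \gamma$ modelling loss from Claim~\ref{claim:LossDifferences} they account for the $2\gamma$ approximation term. The exploitation meta-rounds inherit, in expectation, the full-information regret of $\RegMinPerfInfo$ against the estimated reward sequence, producing the $O\!\left(\frac{m}{\gamma}\sqrt{\frac{m}{\gamma}\,n\log N/T}\right)$ shape from Theorem~\ref{thm:PerfectInformation}. Finally, the variance of the importance-weighted estimators inflates this shape by a factor of $1/\eta = 1/\gamma$ from the inverse exploration probability and by an extra $n^{1/\gamma}$ from summing variances across the local histories visited per meta-round, yielding both the $\frac{m\,n^{1/\gamma}}{\gamma^2}$ prefactor and the additional $n^{1/\gamma}$ inside the square root of the stated bound on $\AverageRegret{0}\left(\RegMinImperfInfo,\Adversary,G,T,\PerfectInfoExpertSet\right)$.

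The main obstacle I anticipate is making the importance-sampling estimator compatible with the implicit weight representation. Because $N$ is doubly exponential in $1/\gamma$, any correct scheme must apply importance corrections at the granularity of (history, action) pairs rather than of whole experts, and must establish that these per-piece estimators aggregate to an unbiased, low-variance expert-level estimator whose total variance scales only with the combinatorial dimension $n^{1/\gamma}$ rather than with $N$. This is the step where the proof has to depart meaningfully from the full-information analysis of $\RegMinPerfInfo$, and it is what forces the additional $n^{1/\gamma}$ factors that distinguish the imperfect-information bound from its perfect-information counterpart.
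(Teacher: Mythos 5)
Your proposal matches the paper's proof in all essentials: both keep the reduction to $\RepeatedGame{G}{K}$ with $K=m/\gamma$ and the implicit per-trace weights, estimate the losses $\ell\left(p,\vec{a},\sigma\right)$ at the level of the $n^{O\left(1/\gamma\right)}$ traces rather than the $N$ experts, charge a $\gamma$ fraction of rounds for exploration plus the $\gamma=m/K$ modelling loss of Claim~\ref{claim:LossDifferences} to obtain the $2\gamma$ term, and let the estimator's $n^{1/\gamma}/\gamma$ magnitude inflate the weighted-majority bound into the stated prefactors. The only difference is the exploration schedule --- the paper groups meta-rounds into phases of $n^{1/\gamma}/\gamma$ and plays each of the $\left|\DefenderActions\right|^{m/\gamma}\leq n^{1/\gamma}$ defender action sequences in one uniformly random meta-round per phase, updating weights once per phase, whereas you explore with probability $\gamma$ per meta-round using uniform actions and importance weights --- and both choices give unbiased per-trace estimators against the oblivious adversary and the same final bound.
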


The regret bound of Theorem \ref{thm:PerfectInformation} is simply the regret bound achieved by the standard weighted majority algorithm \cite{littlestone1989weighted} plus the modeling loss term from Claim \ref{claim:LossDifferences}. The main challenge is to provide an efficient simulation of the weighted majority algorithm. There are an exponential number of experts so no efficient algorithm can explicitly maintain weights for each of these experts. To simulate the weighted majority algorithm $\RegMinPerfInfo$ implicitly maintains the weight of each expert. 

To simulate the weighted majority algorithm we must be able to {\em efficiently sample} from our weighted set of experts (see $\SampleAlgorithm$) and efficiently update the weights of each expert in the set after each round of $\RepeatedGame{G}{K}$ (see update weight stage of $\RegMinPerfInfo$). 

\noindent {\bf Meet the Experts} Instead of using $\FixedDefender$ as the set of experts, $\RegMinPerfInfo$ uses a larger set of experts $ \PerfectInfoExpertSet$ ($\FixedDefender \subset \PerfectInfoExpertSet$). Recall that a $K$-adaptive strategy is a function $f$ mapping the $K$ most recent outcomes $H_K$ to actions. We use a set of $K$-adaptive strategies $E= \{f_\sigma : \sigma \in \Sigma \} \subset \KAdaptiveDefenderSet$ to define an expert $E$ in $\RepeatedGame{G}{K}$: if the current state of the real bounded-memory game $G$ is $\sigma$ then $E$ uses the $K$-adaptive strategy $f_\sigma$ in the next round of $\RepeatedGame{G}{K}$ (i.e., the next $K$~rounds of $G$).  $\PerfectInfoExpertSet$ denotes the set of all such experts. 

\noindent {\bf Maintaining Weights for Experts Implicitly} To implicitly maintain the weights of each expert $E \in \PerfectInfoExpertSet$ we use the concept of a game trace. We say that a game trace $p = \sigma, d^1,O^1,...,d^{i-1},O^{i-1},d^i$ is consistent with an expert $E$ if $f_\sigma\left(O^1,...,O^{j-1} \right) = d^j$ for each $j$. We define the set $\Consistent{E}$ to be the set of all such consistent traces of maximum length $K$ and $\ConsistentAll = \bigcup_{E \in \PerfectInfoExpertSet} \Consistent{E}$ denotes the set of all traces consistent with some expert $E \in \PerfectInfoExpertSet$. $\RegMinPerfInfo $ maintains a weight $w_p$ on each trace $p \in \ConsistentAll$. The weight of an expert $E$ is then defined to be
$W_E = \prod_{p \in \Consistent{E}} w_p$. 

Given adversary actions $\vec{a} = a_1,...,a_K$ and a trace $p = \sigma, d^1,O^1,...,d^{i-1},O^{i-1},d^i$ we define $\ReachableTrace{\vec{a},\sigma ', p}$.\\
\begin{wrapfigure}{r}{0.7 \textwidth}
\vspace{-30pt}
 \[\ReachableTrace{\vec{a},\sigma ', p} =  \begin{cases}
0 & \text{if $\sigma \neq \sigma '$}; \\
\prod_{j<i} \Pr\left[O^j ~\vline ~ a^j, d^j \right] & \text{otherwise};\\
\end{cases} \]
\vspace{-10pt}
\end{wrapfigure}
Intuitively, $\ReachableTrace{\vec{a},\sigma ', p}$ is the probability that each outcome of $p$ would have occurred given the adversary actions were $\vec{a}$ and the initial state was $\sigma '$. We use $\ell\left(p,\vec{a},\sigma '\right)$ to denote the payment that the defender received for playing $d^i$ (the last action in $p$). Formally $
\ell\left(p,\vec{a}, \sigma '\right) = \Pay\left(\sigma^f_p,d^i, a^i \right) \ReachableTrace{\vec{a},\sigma ', p} $,
where $\sigma^f_p$ denotes the state reached following the trace $p$ (after observing outcomes $O^1,...,O^{i-1}$ starting from $\sigma_0$) and $d^i$ is the final defender action in the trace.  Notice that in the imperfect information setting the defender could not compute $\ell$ because he would not observe the adversary's actions $\vec{a}$. 

\noindent {\bf Updating Weights Efficiently} While updating weights $\RegMinPerfInfo$ maintains the invariant that $ w_p = \beta^{\sum_{j=1}^{T/K} \ell\left(p,\vec{a}^j,\sigma^{jK}\right)}$, 
where $\sigma^{jK}$ is the state of $G$ after $jK$ rounds and $\vec{a}^t$ is the actions the adversary played during the $j$'th round of $\RepeatedGame{G}{K}$. The standard weighted majority algorithm maintains the invariant that 
$W_E = \beta^{\sum_{j=1}^{T/K} \Pay\left(E, \vec{a}^t, \RepeatedGame{G}{K}  \right)} $. Claim \ref{claim:PerfInfoClaim} implies that $\RegMinPerfInfo$ also maintains this invariant with its implicit weight representation --- the proof of Claim \ref{claim:PerfInfoClaim} is in Appendix \ref{apdx:RegretMinAlg}. 
\begin{claimN} \label{claim:PerfInfoClaim}
\[
\prod_{p \in \Consistent{ E}} \beta^{\sum_{j=1}^{T/K} \ell\left(p,\vec{a}^j,\sigma^{jK}\right)} = \beta^{\sum_{j=1}^{T/K} \Pay\left(E, \vec{a}^j, \RepeatedGame{G}{K} \right)} \ . \]
\end{claimN}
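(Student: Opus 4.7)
The plan is to take $\log_\beta$ of both sides, turning the identity into a sum identity over all rounds $j$ of $\RepeatedGame{G}{K}$ and all traces $p \in \Consistent{E}$. Swapping the order of summation reduces the claim to proving, for each fixed round $j$, the per-round identity
\[
\sum_{p \in \Consistent{E}} \ell\bigl(p,\vec{a}^j,\sigma^{jK}\bigr) \;=\; \Pay\bigl(E,\vec{a}^j,\RepeatedGame{G}{K}\bigr).
\]
This is the real content of the claim; the rest is linearity and exponent manipulation.

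For a fixed $j$, I would first observe that by the definition of $\ReachableTrace{\vec{a}^j,\sigma^{jK},p}$, any trace $p$ whose initial state is not $\sigma^{jK}$ contributes zero to the left-hand side. Thus the sum can be restricted to traces that start at $\sigma^{jK}$, which are exactly the plays that can arise during round $j$ of $\RepeatedGame{G}{K}$ when $E$ uses its $K$-adaptive strategy $f_{\sigma^{jK}}$. I would then partition these traces by their length $i \in \{1,\ldots,K\}$: because the defender's moves along the trace are prescribed by $f_{\sigma^{jK}}$, a length-$i$ consistent trace starting at $\sigma^{jK}$ is uniquely determined by the outcome sequence $O^1,\ldots,O^{i-1}$, with $d^\ell = f_{\sigma^{jK}}(O^1,\ldots,O^{\ell-1})$.

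Next I would invoke the independent-outcomes property of bounded-memory games to interpret $\ReachableTrace{\vec{a}^j,\sigma^{jK},p} = \prod_{\ell<i}\Pr[O^\ell \mid d^\ell,a^j_\ell]$ as exactly the probability of observing the outcome prefix $O^1,\ldots,O^{i-1}$ when the defender plays $f_{\sigma^{jK}}$ and the adversary plays $\vec{a}^j$ starting from state $\sigma^{jK}$. Since $\sigma^f_p$ is the state reached along this outcome prefix and the payoff $\Pay(\sigma^f_p,d^i,a^j_i)$ is deterministic given that state and the round-$i$ actions, the inner sum
\[
\sum_{p \text{ of length } i} \ell\bigl(p,\vec{a}^j,\sigma^{jK}\bigr)
\]
is precisely the expected instantaneous payoff at round $i$ of the $K$-round play. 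Summing over $i=1,\ldots,K$ yields the total expected reward that $E$ earns against $\vec{a}^j$ over $K$ rounds, which is $\Pay(E,\vec{a}^j,\RepeatedGame{G}{K})$ by the definition of rewards in $\RepeatedGame{G}{K}$.

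The main obstacle is purely one of careful bookkeeping: making precise the bijection between length-$i$ consistent traces starting at $\sigma^{jK}$ and outcome prefixes of length $i-1$, and then verifying that the independent-outcomes factorization of $\ReachableTrace{\cdot}$ lets one rewrite the expectation of the instantaneous payoff as a sum of $\ell$-contributions. Once the definitions of $\ell$, $\ReachableTrace{}$, and $\sigma^f_p$ are lined up, the identity is simply the standard expansion of an expected reward as a sum over paths in the game tree rooted at $\sigma^{jK}$.
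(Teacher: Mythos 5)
Your proposal is correct and follows essentially the same route as the paper: the paper's proof simply asserts in one line that $\sum_j \Pay(E,\vec{a}^j,\RepeatedGame{G}{K}) = \sum_{p\in\Consistent{E}}\sum_j \ell(p,\vec{a}^j,\sigma^{jK})$ ``since the overall payoff of an expert can be expressed as a sum of the individual immediate payoffs after each action,'' and then does the exponent manipulation. Your write-up just supplies the bookkeeping (restriction to traces rooted at $\sigma^{jK}$, partition by trace length, and the independent-outcomes factorization of $\ReachableTrace{\cdot}$ as a path probability) that the paper leaves implicit.
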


\noindent {\bf Sampling Experts Efficiently} We can also efficiently sample from $\PerfectInfoExpertSet$ using dynamic programming (see $\SampleAlgorithm$). Using the notation $p \sqsubset p'$ for $p'$ extends $p$ we can define $\hat{w}_p$. Intuitively, $\hat{w}_{p;O;d}$ represents the weight of the action $d$ from history $p;O$. 
\begin{wrapfigure}{r}{0.4 \textwidth}
\vspace{-15pt}
\[ \hat{w}_p = \sum_{E:p \in \Consistent{E}} \prod_{p' \in \Consistent{E} \wedge p \sqsubset p' } w_{p'}   \]
\vspace{-20pt}
\end{wrapfigure}
Using dynamic programming we can efficiently compute $\hat{w}_p$ for each trace $p$ because there are only $n^{O\left(1/\gamma \right)}$ such traces. Using the weights $\hat{w}_p$ we can efficiently sample from $\PerfectInfoExpertSet$. We use $p;O;d$ to denote a new game trace which contains all of the outcomes/actions in $p$ appended with $O$ and $d$.

\begin{minipage}[t]{0.42\textwidth}
\hrule
{\bf Algorithm:} $\RegMinPerfInfo\left(\gamma, G\right)$
\hrule 
\noindent $\bullet$ {\bf Initialize:} $K = m/\gamma$ \newline 
\noindent $\bullet$  {\bf Construct: } $\RepeatedGame{G}{K}$ \newline
\noindent $\bullet$  {\bf Each Round:} \newline 
\cut{\begin{enumerate}
\item $\sigma \leftarrow G.CurrentState$
\item $E\leftarrow \SampleAlgorithm$
\item Play $E$
\item  Observe adversary actions $$\vec{a} = a^1, ..., a^K \ . $$
\item  {\bf Update Weights: } For each $p \in \ConsistentAll$ 
\end{enumerate}}
\noindent $~~~~~~$ 1. $\sigma \leftarrow G.CurrentState$ \newline 
\noindent $~~~~~~$ 2. $E\leftarrow \SampleAlgorithm$ \newline
\noindent $~~~~~~$ 3. Play $E$ \newline
\noindent $~~~~~~$ 4. Observe adversary actions $$\vec{a} = a^1, ..., a^K \ . $$ 
\noindent 5. {\bf Update Weights: } For each $p \in \ConsistentAll$ \newline 
$~~~~~~~~$ A. Compute $\ell\left(p,\vec{a},\sigma\right)$ \newline
$~~~~~~~~$ B. Set $w_p \leftarrow w_p \times \beta^{\ell\left(p,\vec{a},\sigma\right)}$. 
\hrule
\end{minipage}
\begin{minipage}[t]{0.53\textwidth}
\hrule
{\bf Algorithm:} $\SampleAlgorithm$
\hrule 
\noindent $\bullet$ For each trace $p \in \ConsistentAll$ recursively compute $\hat{w}_p$ using the formula:
   \[ \hat{w}_p = \sum_{O \in \OutcomeSet} \sum_{d \in \DefenderActions} \beta^{\sum_{t=1}^T \ell\left(p;O;d, \vec{a}^t, \sigma^{Kt} \right) } \hat{w}_{p;O;d} \ . \]
\noindent $\bullet$ {\bf Build Strategy} $E$: For each $p\in \ConsistentAll$ and   $O \in \OutcomeSet$, randomly select $d \in \DefenderActions$ 
\[\Pr\left[d ~\vline~ p,~ O \right] = \frac{\hat{w}_{p;O;d}}{\sum_{d' \in \DefenderActions} \hat{w}_{p;O;d'} } \ .\]
\noindent $\bullet ~E$ play $d$ any time it observes history $p;O$.

\hrule

\end{minipage}

Claim \ref{claim:Sample} says that $\SampleAlgorithm$ outputs each expert $E$ with probability proportional to $W_E$.
\begin{claimN} \label{claim:Sample}
For each expert $E \in \PerfectInfoExpertSet$ Algorithm $\SampleAlgorithm$ outputs $E$ with probability 
\[ \Pr\left[E \right] \propto W_E \ . \]
\end{claimN}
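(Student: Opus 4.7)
The plan is to recognize $\SampleAlgorithm$ as the standard top-down sampling procedure from a tree-structured distribution over experts, with the dynamic-programming quantities $\hat{w}_p$ serving as unnormalized subtree totals. I would break the argument into three steps.

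First, I would prove by backward induction on the remaining depth $K - |p|$ that the values $\hat{w}_p$ computed by $\SampleAlgorithm$ agree with the definition
\[
\hat{w}_p \;=\; \sum_{E \,:\, p \in \Consistent{E}} \;\prod_{p' \in \Consistent{E},\; p \sqsubset p'} w_{p'}.
\]
The base case is a trace $p$ of maximum depth $K$: there are no strict extensions, so the inner product is empty and $\hat{w}_p$ reduces to the number of experts consistent with $p$, an easy-to-verify normalization factor. In the inductive step, I would partition the strict extensions of $p$ in $\Consistent{E}$ by the next outcome $O$ and next action $d = f_{\sigma_p}(h_p;O)$, factor out $w_{p;O;d}$, and apply the inductive hypothesis to each $\hat{w}_{p;O;d}$, verifying that the result matches the recursion used by the algorithm.

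Second, I would compute $\Pr[E]$ for a target expert $E$. Because $\SampleAlgorithm$ samples the next action at each $(p,O)$ pair independently, and because $E$ is determined by the choices made at those pairs with $p \in \Consistent{E}$,
\[
\Pr[E] \;=\; \prod_{\substack{p \in \Consistent{E} \\ O \in \OutcomeSet}} \frac{\hat{w}_{p;O;\,f_{\sigma_p}(h_p;O)}}{\sum_{d' \in \DefenderActions} \hat{w}_{p;O;d'}}.
\]
Using Step~1, each denominator $\sum_{d'} \hat{w}_{p;O;d'}$ is a subtree-sum piece that reappears, via the recursion, inside the expansion of $\hat{w}_p$ at the parent level, while the numerator $\hat{w}_{p;O;\,f_{\sigma_p}(h_p;O)}$ is in turn a denominator one level deeper along the tree of traces in $\Consistent{E}$. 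Cancelling along this tree, the only surviving numerator factors are the $w_{p'}$ with $p' \in \Consistent{E}$, whose product is $W_E$, while the surviving denominator is a normalizing constant $Z = \prod_\sigma \hat{w}_\sigma$ that depends only on the game, not on $E$.

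Third, since the sampling decisions governing $f_\sigma$ for distinct initial states $\sigma$ involve disjoint families of traces, the output strategies at different $\sigma$ are drawn independently; this matches the factorization $W_E = \prod_\sigma W_{E,\sigma}$ and yields $\Pr[E] \propto W_E$. The main obstacle is the careful bookkeeping of the telescoping in Step~2: one must track exactly which $\hat{w}$ values occur as numerators and which as denominators along the $\Consistent{E}$-tree, and verify that the non-$E$-consistent contributions absorbed into the $\sum_{d'}$ denominators recombine, via the recursion, into the parent $\hat{w}_p$ so that cancellation leaves exactly $W_E / Z$.
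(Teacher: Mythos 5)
Your proposal matches the paper's proof in essence: the paper likewise writes $\Pr\left[\mbox{Output } E\right]$ as the product over $(p,O)$ pairs of the per-node selection probabilities $\hat{w}_{p;O;d}/\sum_{d'}\hat{w}_{p;O;d'}$, expands each $\hat{w}$ via its closed-form definition, completes each partial product $\prod_{p': p;O;d \sqsubset p'} w_{p'}$ to $W_{E'}$ by multiplying numerator and denominator by the weights of the prefixes of $p$, and telescopes to obtain $W_E/\sum_{E'} W_{E'}$. Your Step 1 (checking the dynamic-programming recursion against the closed-form definition of $\hat{w}_p$) and Step 3 (independence across initial states $\sigma$) are reasonable elaborations that the paper leaves implicit, but they do not constitute a different route.
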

 Given $\SampleAlgorithm$ it is straightforward to simulate the standard weighted majority algorithm. To update weights $\RegMinPerfInfo$ simply loops through all traces $p \in \ConsistentAll$ applying the update rule
$w_p = w_p \times \beta^{\ell\left(p,\vec{a}^t, \sigma^{tK}\right)}$, where $\beta$ is a learning parameter we tune later. The formal proof of Theorem \ref{thm:PerfectInformation} can be found in Appendix \ref{apdx:RegretMinAlg} along with the proof of claim \ref{claim:Sample}.

At a high level our algorithm is similar to the online shortest path algorithm developed by Takimoto and Warmuth \cite{takimoto2003path}. In their work, they consider the set of all source-destination paths in a graph as experts. Since there are exponentially many paths they also maintain the weights of the experts implicitly. In their setting, the defender completely controls the chosen path. In contrast, our experts adapt to adversary actions. The challenge was constructing a new implicit weight representation which works for $K$-adaptive strategies. 

Using this implicit weight representation we could have also used the general  barycentric spanner approach to online linear optimization developed by  Awerbuch and Kleinberg \cite{awerbuch2008online} to design a $\gamma$-approximate $0$-adaptive regret minimization algorithm running in time $n^{O\left(1/\gamma\right)}$. However, we are able to achieve better regret bounds in theorem \ref{thm:PerfectInformation} by simulating the weighted majority algorithm. Awerbuch and Kleinberg \cite[Theorem 2.8]{awerbuch2008online} achieve the average regret bound $O\left(Md^{5/3}/T^{1/3} \right)$, where $d$ is the dimension of the problem space and $M$ is a bound on the cost vectors. By comparison our regret bounds in Theorems \ref{thm:PerfectInformation} and \ref{thm:ImperfectInformationApproximateRegret} tend to $0$ with $1/\sqrt{T}$. In our setting, the dimension of the problem space is $d = O\left(n^{\left(1/\gamma \right)}\right)$ (the number of nodes in the decision tree), and $M = K = m/\gamma$ is the upper bound on the cost vector in each round of $\RepeatedGame{G}{K}$. The average regret bound would be $O\left(\frac{m}{\gamma}n^{5/\left(3\gamma\right)}/T^{1/3} \right)$. 
the regret bound is proportional to $\sqrt{n^{1/\gamma}/T}$. By comparison Theorem \ref{thm:PerfectInformation} has a $\sqrt{n^{1/\gamma}}$ in the numerator.

The standard regret minimization trick for dealing with imperfect information in a repeated game is to break the game up into phases and perform random sampling in each round to estimate the cost of each expert and update weights. The challenge in adapting $\RegMinPerfInfo$ is that there are exponentially many experts in $\PerfectInfoExpertSet$. Our key idea was to estimate $\ell\left(p, \vec{a}, \sigma\right)$ for each $p \in \ConsistentAll$ so there are only $n^{O\left(1/\gamma \right)}$ samples to take in each phase. We can then update the implicit weight representation using the estimated values $\ell\left(p,\vec{a},\sigma\right)$.

%\Full{}{\input{colt_imposibility}}

\section{Open Questions} \label{sec:conclusion}
In this paper, we defined a new class of games called bounded-memory    
games, introduced several new notions of regret, and presented hardness
results and algorithms for  regret minimization in this subclass of stochastic games.                        
Because both the games and the notions of regret we study in this paper 
rely on novel definitions, they raise a number of interesting open  
problems:
%\item Consider the subclass of bounded-memory games $\mathcal{G}$       
%defined in section \ref{sec:hardness}. The reduction used to prove      
%Theorem~\ref{thm:hardness} (absence of efficient $\gamma$-approximate   
%oblivious regret minimization algorithms for $\gamma < 1/8n$)) relies on the property        
%that the games under consideration are imperfect information games,     
%where the defender can only observe outcomes, but not plays from        
%the adversary. Suppose that we instead consider a subclass         
%of bounded-memory games $\mathcal{G}'$, defined similarly to            
%$\mathcal{G}$, but which satisfy perfect information (the moves of      
%the adversary are visible to the defender). Is there an efficient       
%oblivious regret minimization algorithm against the class of        
%oblivious adversaries for games $G' \in \mathcal{G}'$?      
(1) To what extent can the hardness results of Theorems \ref{thm:hardness} and \ref{thm:hardnessExpTimeHyp} be further improved?
($\gamma =1/{\log n}$?) Could similar hardness results apply to games with perfect information?            
(2) Is there an efficient {\em non-approximate} oblivious regret minimization algorithm for bounded-memory games with perfect information?
(3) Is there a $\gamma$-approximate oblivious regret minimization algorithm with running time $n^{o\left(1/\gamma \right)}$? For example, could one design a $\gamma$-approximate oblivious regret minimization algorithm with running time $n^{-\log \gamma}$? 
(4) For repeated games $(m=0)$ is there an efficient $\gamma$-approximate $k$-adaptive regret minimization algorithm if we use $\KAdaptiveDefenderSet$ as our set of experts $(K=\log n)$?

\bibliographystyle{splncs}
\bibliography{privacyGame}

\appendix
%\section*{Appendix} \label{sec:appendix}
\section{Hardness Reduction: Proof of Claims} \label{apdx:Hardness}

This section contains the proofs of the lemmas and theorems from section \ref{sec:hardness}.

\newcommand{\claimHardnessFixedStrategyFullVersion}{
\begin{proofof}{Claim \ref{claim:hardnessFixedStrategy}}
Let $x_1*,...,x_{n-1}*$ be the assignment that satisfies at least $(1-\beta)$ fraction of the clauses and let $s_0,...,s_{n-1}$ be the De Bruijn sequence played by the adversary.  $x_n$ is an additional variable that is not in any of the clauses. Then the on round $t$ we have

\[ \sigma^t = \left( \langle  s_{i-1 \mod n},...,s_{i-m \mod n} \rangle, \hat{O}^{t-1} \right)    \ , \]

where $i = t \mod n$ so both these states are associated with the variable $x_i$.  For $0 \leq i < n$ we set

  \[ f\left( \langle  s_{i-1 \mod n},...,s_{i-m \mod n} \rangle, 0 \right)  = x_i* \ .\]

To avoid taking a penalty we set
\[f\left( \langle  s_{i-1 \mod n},...,s_{i-m \mod n} \rangle, 1 \right)   = 2 \ , \]
for $0 < i < n$.  For $i=0$ we set 
\[f\left( \langle  s_{i-1 \mod n},...,s_{i-m \mod n} \rangle, 1 \right)   = 0 \ , \]
to produce the outcome $\hat{O}^{t} = 0$ (recall that the adversary will play $a^t = (s_0,3)$ whenever $t \equiv 0 \mod n$ so we can avoid the penalty).  The fixed strategy $f$ will receive reward $1$ in stage $j$ if and only if $x_1*,...,x_{n-1}*$ satisfies the clause $C_j$ chosen in stage $j$. 

\begin{eqnarray}
 E_R \left[\AveragePay\left(f,\Adversary_R,G,n \right) \right] &\geq& \frac{(1-\beta)}{n} 
\end{eqnarray}

\end{proofof}}

\begin{reminderclaim}{\ref{claim:hardnessFixedStrategy}}
\claimHardnessFixedStrategy
\end{reminderclaim}

\claimHardnessFixedStrategyFullVersion

\newcommand{\proofClaimHardnessDefenderPerformance}{
\begin{proofof}{Claim \ref{claim:hardnessDefenderPerfomance}}
By Claim \ref{claim:hardnessFixedStrategy} there is a fixed strategy with

\[E_R \left[\AveragePay\left(\Defender,\Adversary_R,G,T \right) \right] \geq  \frac{(1-\beta)}{n} \ . \]

Set $\epsilon = \beta/n$, and apply definition \ref{def:RegMinAlgApproximate} to get
\[\AveragePay\left(f,\Adversary_R,G,T \right) -\AveragePay\left(\Defender,\Adversary_R,G,T \right) \leq \left(\frac{1}{8n} - \frac{3\beta}{n} \right) + \beta/n \ ,\]
for any random string $R$ (adversary coin flips).  This means that

\[E_R\left[\AveragePay\left(f,\Adversary_R,G,T \right)\right] -E_R\left[\AveragePay\left(\Defender,\Adversary_R,G,T \right)\right] \leq  \left(\frac{1}{8n} - \frac{3\beta}{n} \right) + \frac{\beta}{n} \ .\]

Rearranging terms
\begin{eqnarray*}
E_R\left[\AveragePay\left(\Defender,\Adversary_R,G,T \right)\right] &\geq& \frac{(1-\beta)}{n} - \frac{1}{8n} + \frac{2\beta}{n} \\ 
&=& \frac{7}{8n} + \frac{\beta}{n}
\end{eqnarray*}

\end{proofof}}
\begin{reminderclaim}{\ref{claim:hardnessDefenderPerfomance}}
\claimHardnessDefenderPerformance
\end{reminderclaim}

\proofClaimHardnessDefenderPerformance

\newcommand{\proofClaimRecoverAssignment}{
\begin{proofof}{Claim \ref{claim:recoverAssignment}}
Let $p(\circ)$ be given such that $T\left(\Defender \right) \leq p(n)$ and set
 \[\alpha = n \times E_R \left[\AveragePay\left(\Defender,\Adversary_R,G,T \right) \right] \ .\]

We present $S$  ( Algorithm \ref{alg:SimulationAdversary}) - an algorithm to recover the variable assignment.  $S$ runs in time \[T(S) = O\left(p(n)^2\right) \ . \]  

\begin{algorithm}[h!]
\caption{Assignment Recovery}\label{alg:SimulationAdversary}
%  \begin{algorithmic}[1]

\begin{itemize}
\item {\bf Input:} $\Defender$
  \item {\bf Input:} MAX3SAT instance $\phi$, with variables \[x_1,\ldots, x_{n-1} \ , \] and clauses \[C_1, \ldots , C_\ell \ , \]
  \item {\bf  De Bruin sequence: } $s_0,...,s_{n-1}$
  \item {\bf Initialize:} Set $t \leftarrow  0$, $H \leftarrow \emptyset$, $T \leftarrow p(n)$, $\alpha* \leftarrow 0$
  \item {\bf Round $t$: } Set $i \leftarrow t \mod n$ 
  
\begin{enumerate}
\item {\bf Check 1: } If $t \geq T$ then return.
\item {\bf Check 2: } If our current assignment $x_1,...,x_{n-1}$ satisfies $y$ fraction of the clauses where $y > \alpha*$ then set
 \[x_i* \leftarrow x_i \ ,\]
and
 \[ \alpha \leftarrow y \ . \]
  
\item  {\bf Select Clause:}  If $i = 0$ then select a new clause $C$ uniformly at random from $C_1,...,C_\ell$, and set $H' = \emptyset$. 
\item  {\bf Select Adversary Move: } 

\[ 
a^i  \leftarrow \begin{cases}
(s_i,3) & \text{if $i = 0$}; \\
(s_i,1) & \text{if $x_i \in C$};\\
(s_i,0) & \text{if $\bar{x}_i \in C$}; \\
(s_i,2) & \text{otherwise}.
\end{cases}
\]

\item {\bf Select Defender Move: }

\[d^i \leftarrow \Defender\left(H^{t-i}; H'\right) \ ,\]

\item {\bf Update:} Let $O^i$ be the outcome and set  

\[H \leftarrow H + \left(s_i, \hat{O}^i\right) \ , \]
\[H' \leftarrow H' + \left(s_i, 0\right) , \]
\[t \leftarrow t + 1 \ , \]
\[x_i \leftarrow d^i \ , \]

\end{enumerate}

\end {itemize}
\end{algorithm}

During the simulation we present $\Defender$ with (potentially) false history in each stage, where the defender always thinks he hasn't satisfied the clause $C$. Let $\mathbf{Y}_j$ be the expected fraction of clauses satisfied in stage $j$ of the simulation. We define the random variable $\mathbf{X}_j$ to be the reward $\Defender$ earns in stage $j$ in the actual game. Observe that the game is structured so that two rewards during the same stage must be separated by a penalty. When the defender receives a reward the outcome $\hat{O}^{t-1}$ is produced. If the defender wishes to avoid an offsetting penalty then he must keep producing the outcome $\hat{O}^{t-1}$ by playing $d^t = 2$, preventing him from receiving an award for the rest of the stage.  The maximum payout a defender strategy $\Defender$ can receive during any stage is $1$ so $\mathbf{X}_j \in \{0,1\}$. Because of imperfect information the defender cannot learn any information about the clause the adversary has selected. We have  

\[E[\mathbf{X}_j] = \Pr[\mathbf{X}_j=1] = E[\mathbf{Y}_j ] \ .\]

In particular

\[\alpha =  \frac{n}{T} \sum_{j=1}^{T/n} E[\mathbf{Y}_j ] \ , \]
so there exists a round $j$ such that $E[\mathbf{Y}_j ] \geq \alpha$.  Let $\mathbf{Y}$ denote the number of clauses satisfied by $S$, then \[\mathbf{Y} = \max_j \mathbf{Y}_j \ ,\] so we have
\[E[\mathbf{Y}] \geq \alpha \ . \]
\end{proofof}}

\begin{reminderclaim}{\ref{claim:recoverAssignment}}
\claimRecoverAssignment
\end{reminderclaim}

\proofClaimRecoverAssignment

\newcommand{\proofTheoremHardness}{

\begin{proofof}{Theorem \ref{thm:hardness}}
The key point is that if an algorithm $S$ runs in time $O\left(p(n)\right)$ on instances of size $n^\beta$ for some polynomial $p(n)$ then on instances of size $n$ $S$ runs in time $O\left(p\left(n^{1/\beta} \right) \right)$ which is still polynomial time. Unless {\sf NP} $=$ {\sf RP} $\forall \epsilon, \beta > 0$ and every algorithm $S$ running in time poly(n), there exists an integer $n$ and a MAX3SAT formula $\phi$ with $n^\beta$ variables such that 
\begin{enumerate}
\item There is an assignment satisfying at least $(1-\epsilon)$ of the clauses in $\phi$.
\item The expected fraction of clauses in $\phi$ satisfied by $S$ is  $\leq \frac{7}{8} + \epsilon$. 
\end{enumerate}
If we reduce from a MAX3SAT instance with $n^\beta$ variables we can construct a game with $O(n)$ states ($n^{1-\beta} $ copies of each variable). One Hamiltonian cycle would now corresponds to $n^{1-\beta} $ phases of the game. This means that the expected average reward of the optimal fixed strategy is at least
\[\max_{f \in \FixedDefender} E_R\left[ \AveragePay\left(f, \Adversary_R, G,T \right)\right] \geq \frac{n^{1-\beta} \left(1-\epsilon\right)}{n}  \ ,\]
while the expected average reward of an efficient defender strategy $\Defender$ is at most
\[   E_R\left[ \AveragePay\left(\Defender, \Adversary_R, G,T \right)\right] \leq \frac{n^{1-\beta} \left(\frac{7}{8}+\epsilon\right)}{n} \ .\]
Therefore, the expected average regret is at least
\[\AverageRegret{0}\left(\Defender, \Adversary_R, G,T,\FixedDefender \right) \geq \left(\frac{1}{8}-2\epsilon \right)n^{-\beta} \ . \]
\end{proofof}

}

\newcommand{\proofOfTheoremHardnessExpTime}{

\begin{proofof}{Theorem \ref{thm:hardnessExpTimeHyp}} (sketch)
Assume that the randomized exponential time hypothesis holds. Then because it is NP-hard to approximate MAX3SAT within any factor better than $\frac{7}{8}$ \cite{hastad2001some} no randomized algorithm which satisfies $\geq \frac{7}{8} + \epsilon$ of the clauses in a MAX3SAT instance in expectation can run in time \[ 2^{o(n)} \ . \]

Now we argue that it is sufficient to reduce from a MAX3SAT instance with $n' = \log^2 n$ variables (instead of $n^\beta$ variables). One Hamiltonian cycle now corresponds to \[\frac{n}{\log^{2} n} \ , \]
phases of the game. Our bounded-memory game $G$ has $n$ states then any efficient $\gamma$-approximate regret minimization algorithm $S$ must run in time $O\left(n^k\right)$ for some constant $k$. 
If the randomized exponential time hypothesis holds then the expected average reward of an efficient defender strategy $\Defender$ is at most
\[   E_R\left[ \AveragePay\left(\Defender, \Adversary_R, G,T \right)\right] \leq \frac{\frac{n}{\log^2 n} \left(\frac{7}{8}+\epsilon\right)}{n} \ ,\]
since \[ n^c = 2^{k \sqrt{\log^2 n}} = 2^{k \sqrt{n'}} = 2^{o(n')} \ . \]

However, if the MAX3SAT formula was satisfiable then the expected average reward of the optimal fixed strategy is at least
\[\max_{f \in \FixedDefender} E_R\left[ \AveragePay\left(f, \Adversary_R, G,T \right)\right] \geq \frac{\frac{n}{\log^2 n} \left(1-\epsilon\right)}{n} = \frac{1-\epsilon}{\log^2 n}  \ .\]

Therefore, the expected average regret is at least
\[\AverageRegret{0}\left(\Defender, \Adversary_R, G,T,\FixedDefender \right) \geq \frac{\left(\frac{1}{8}-2\epsilon \right)}{\log^2 n} \ . \]

Assume for contradiction that $\gamma < \frac{1}{8\log^2 n}$ then $S$ can be adapted to satisfy $\geq \frac{7}{8} + \epsilon$ of the clauses in MAX3SAT with running time
\[ n^c = 2^{k \sqrt{\log^2 n}} = 2^{k \sqrt{n'}} = 2^{o(n')} \ . \] This contradicts the randomized exponential time hypothesis.
 
\end{proofof}

}

Before we prove Theorem \ref{thm:hardness} we will first prove an easier Lemma using these claims. The proof of Lemma \ref{lemma:hardness} can be easily adapted to prove Theorems \ref{thm:hardness} and \ref{thm:hardnessExpTimeHyp}. 

\begin{lemma} \label{lemma:hardness}
\lemmaHardness
\end{lemma}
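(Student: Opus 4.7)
The plan is a direct chain of the three claims already established, closed by H{\aa}stad's inapproximability theorem for MAX3SAT. I would proceed by contradiction: suppose $\Defender$ is an efficient $\gamma$-approximate oblivious regret minimization algorithm using $\FixedDefender$ as experts with $\gamma < 1/(8n)$. I would then pick $\beta > 0$ small enough that $\gamma \leq 1/(8n) - 3\beta/n$ (any $\beta \in (0, (1/(8n) - \gamma)n/3)$ works) and consider an arbitrary MAX3SAT instance $\phi$ whose optimal assignment satisfies a $(1-\beta)$-fraction of its clauses.

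Next I would apply Claim~\ref{claim:hardnessDefenderPerfomance} to the bounded-memory game $G$ constructed from $\phi$ in the reduction of Section~\ref{sec:hardness}. This yields $E_R[\AveragePay(\Defender, \Adversary_R, G, T)] \geq 7/(8n) + \beta/n$ for some $T = \mathrm{poly}(n)$. Feeding this lower bound into Claim~\ref{claim:recoverAssignment} would then produce a polynomial-time randomized algorithm $S$ that, on input $\phi$, satisfies in expectation an $\alpha$-fraction of its clauses, where $\alpha = n \cdot E_R[\AveragePay(\Defender,\Adversary_R,G,T)] \geq 7/8 + \beta$.

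The final step is to invoke H{\aa}stad's theorem~\cite{hastad2001some}: no polynomial-time randomized algorithm approximates MAX3SAT to within any factor strictly better than $7/8$ unless {\sf NP} $=$ {\sf RP}. Since the $S$ produced above beats this threshold on every $\phi$ whose optimum is at least $(1-\beta)$, one obtains the desired contradiction, establishing the lemma.

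The only subtle point is the choice of $\beta$: it must be small enough that the hypothesis of Claim~\ref{claim:hardnessDefenderPerfomance} (namely $\gamma \leq 1/(8n) - 3\beta/n$) is implied by the assumption $\gamma < 1/(8n)$, yet strictly positive so that $7/8 + \beta$ clears H{\aa}stad's $7/8$ gap. All of the genuinely combinatorial work --- the MAX3SAT reduction, the De Bruijn encoding of variables into states, the clause-selection oblivious adversary, and the reward/penalty scheme that forces at most one reward per phase --- has already been packaged into Claims~\ref{claim:hardnessFixedStrategy}--\ref{claim:recoverAssignment}, so no new argument is required at this stage. The only obstacle is essentially bookkeeping, and it is discharged by the simple arithmetic on $\beta$ above.
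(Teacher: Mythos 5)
Your proposal is correct and follows essentially the same route as the paper's own proof: chain Claim~\ref{claim:hardnessDefenderPerfomance} into Claim~\ref{claim:recoverAssignment} to obtain a polynomial-time randomized algorithm satisfying a $\left(\frac{7}{8}+\beta\right)$-fraction of clauses on near-satisfiable instances, then conclude {\sf NP} $=$ {\sf RP} via H{\aa}stad. The only difference is that you spell out the arithmetic relating $\gamma$ and $\beta$, which the paper leaves implicit.
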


\proofofLemmaHardness

The proof of Theorem \ref{thm:hardness} is very similar to the proof of Lemma \ref{lemma:hardness}.

\begin{reminder}{\ref{thm:hardness}}
\thmHardness
\end{reminder}

\proofTheoremHardness

While the proof of Theorem \ref{thm:hardnessExpTimeHyp} makes use of the randomized exponential time hypothesis the argument is similar to the proof of Theorem \ref{thm:hardness}.

\begin{reminder}{\ref{thm:hardnessExpTimeHyp}}
\thmHardnessExpTimeHyp
\end{reminder}
\proofOfTheoremHardnessExpTime

Remark \ref{remark:HardnessExtension} shows how our hardness reduction can be adapted to prove that there is no efficient $k$-adaptive regret minimization algorithm in the perfect information setting $k \geq 1$.
\begin{remark} \label{remark:HardnessExtension} 
In bounded-memory games of perfect information we can replace the oblivious adversary  $\Adversary_R$ in figure \ref{alg:ObliviousAdversary} with a $1$-adaptive adversary and essentially the same reduction will still work. We only need to make a few small modifications. The states of the game will be modified to store the defenders last action. The adversary again plays a Hamiltonian cycle through the states in each phase. Now the first two states we visit correspond to the variable $x_1$, the next two visited states will correspond to $x_2$, etc. If the defender plays actions 1 and 1 (resp. 0 and 0) while visiting the variable $x_1$ then this corresponds to assigning $x_1$ to true (resp. false). If the defender plays 1 and 0 (or 0 and 1) which corresponds to no assignment then the adversary strategy will ensure that he cannot receive a reward. 

The $1$-adaptive adversary will always play $\vec{a}^t[2] = 2$ on even rounds ($t = 0 \mod{2}$) and on odd rounds the adversary will adaptively select $\vec{a}^t[2] = d^{t-1}$ if the defender's last action satisfied the chosen clause $C$, otherwise  $\vec{a}^t[2] = 2$. The defender receives a reward only if (1) he plays a consistent assignment during both rounds (2) the assignment satisfies the chosen clause $C$ and (3) he has not already received a reward during this phase. Now Claim \ref{claim:recoverAssignment} still holds because a defender will always observe the adversary action $\vec{a}^t[2] = 2$ until he satisfied the clause $C$.
\end{remark}

\subsection{Transition Example}
By playing a De Bruijn sequence $S = s_1...s_n$ the adversary can guarantee that we repeatedly take a Hamiltonian cycle over states. For example, considering $8$ states and starting from $x_0$, the sequence $10111000$ corresponds to the Hamiltonian cycle $x_0,x_1,x_2,x_5,x_3,x_7,x_6,x_4$ 
\begin{figure}

\begin{center}
\includegraphics[width=0.6\textwidth]{states.eps}
\end{center}
\caption{De Bruijn example}
%\label{redVariableTransitions}

\end{figure}

\section{Regret Minimization Algorithms} \label{apdx:RegretMinAlg}

\newcommand{\proofTheoremBoundedMemoryRegretMinMain}{

\begin{proofof}{Theorem \ref{thm:memoryDTIRRegretMin}}
(Sketch)
The proof of theorem uses standard regret bound for regret minimization algorithms in games of perfect information \cite{auer1995gambling}. After playing $T$ rounds ($T/K$ rounds of $\RepeatedGame{G}{K}$) we have % Fix Bound to Imperfect Information Bound
\begin{eqnarray*}
\AveragePay\left(\Defender, \KAdaptiveAdversary{k},\RepeatedGame{G}{K},T/K\right) - \AveragePay\left(f, \KAdaptiveAdversary{k},\RepeatedGame{G}{K},T/K\right)  &  \geq &  -4\sqrt{\frac{K N \log N}{T/K}} \ ,
\end{eqnarray*}
for all fixed strategies $f \in\FixedDefender$. Here, $N$ is the number of experts
\[ N = \left|\FixedDefender \right| = \left|\DefenderActions\right|^{\left|\Sigma\right|} \ , \]
and $K$ also denotes the maximum payout in any round of $\RepeatedGame{G}{K}$.Because $K$ was chosen such that $K \equiv 0 \mod{k}$ the adversary $\KAdaptiveAdversary{k}$ is always in phase with $\RepeatedGame{G}{K}$ and we can apply Claim \ref{claim:LossDifferences} to get Theorem
\ref{thm:memoryDTIRRegretMin}.
\end{proofof}}

\subsection{Regret Minimization Algorithm with Imperfect Information} \label{subsec:reduction}
We present $\BndMemRegMin$ (Bounded Memory Weighted Majority), an algorithm that minimizes $k$-adaptive regret for bounded-memory games. This result is significant because there is no $k$-adaptive regret minimization algorithm for the general class of stochastic games(see Theorem \ref{thm:stochasticImpossible} in Appendix \ref{apx:StochasticGames}). A consequence of Theorem \ref{thm:hardness} is that when the expert set includes all fixed strategies $\FixedDefender$ we cannot hope for an efficient algorithm unless {\sf NP} $=$ {\sf RP}.  Indeed, our algorithm would not be efficient in this case because it would have to explicitly maintains weights for exponentially many fixed strategies $\left|\FixedDefender\right| =  \left|\DefenderActions\right|^{n}$.

The key idea behind our $k$-adaptive regret minimization algorithm  $\BndMemRegMin$ is     
to reduce the original bounded-memory game to a repeated game $\RepeatedGame{G}{K}$ of      
imperfect information ($K \equiv 0 \mod{k}$). $\BndMemRegMin$ uses the $\mathsf{Exp3}$ regret minimization algorithm of \cite{auer1995gambling} for repeated games of imperfect information. In particular, $\BndMemRegMin$ uses the strategies selected by $\mathsf{Exp3}$ in each round of $\RepeatedGame{G}{K}$ to play the next $K$ rounds of $G$. $\BndMemRegMin$ feeds $\mathsf{Exp3}$ the hypothetical losses from $\RepeatedGame{G}{K}$ to update the weights of each expert.

\begin{reminder}{\ref{thm:memoryDTIRRegretMin}}
\ImpInfoRegMinAlgThm
\end{reminder}
\proofTheoremBoundedMemoryRegretMinMain

In particular, $\BndMemRegMin$ is a $k$-adaptive regret minimization algorithm for the class of bounded-memory games in the sense of Definition \ref{def:RegMinAlgApproximate} because $\AverageRegret{k} \rightarrow 0$ as $T \rightarrow \infty$. 

\begin{remark}
$\BndMemRegMin$ is inefficient when number of experts $f \in S$ is exponential in $n$, the number of states in $G$. For example, if $S=\FixedDefender$ then $\left| \FixedDefender\right|=\left|\DefenderActions \right|^n$.  For small values of $n$  (example: for repeated games $n=1$) it will still be tractable to run $\BndMemRegMin$ with $S = \FixedDefender$. 
\end{remark}

\subsection{Proofs of Claims and Theorems}
\label{apx:algproof}
This section contains the proof of claims and theorems from section \ref{sec:algorithm}. 

\newcommand{\proofClaimLossDifferences}{

\begin{proofof}{Claim \ref{claim:LossDifferences}}
(Sketch)
Once the defender selects $f$ and the adversary selects strategy $g \in K-ADAPT_A$, the actions of the adversary and the defender are fixed for the next $K$ rounds of $G$. Let $d^1,...,d^K$ (resp. $a^1,...,a^K$) denote the actions taken by the defender (resp. adversary). Once $R_1,...,R_K$ (the random coins used by the outcome function) are fixed then the outcomes $O^1,...,O^K$ are also fixed. Let $\sigma^1,...,\sigma^K$ states encountered in the actual game and let $\sigma_*^1, ..., \sigma_*^K$ be the states that we would have encountered if we had started at $\sigma_0$ as in $\RepeatedGame{G}{K}$. In a bounded-memory property game the state encodes the last $m$ outcomes, but the outcomes do not depend on the starting state so we have
\[ \sigma^j = \sigma_*^j  \ , \]
for all $j \geq m$. This means that for $j \geq m$
\[\Pay\left(\sigma^j, d^j, a^j \right) = \Pay\left(\sigma_*^j, d^j, a^j \right) \ . \]

 Consequently,

\begin{eqnarray*}
\left| \Pay\left(f,g,\sigma, G \right) - \Pay\left(f,g, \sigma_0, G\right) \right| &=&  \left| \sum_{t=1}^k \Pay\left(d_t, a_t, \sigma^i \right) - \sum_{t=1}^k \Pay\left(d_t, a_t, \sigma_*^i  \right) \right| \\
&=& \left| \sum_{t=1}^{m-1} \Pay\left(d_t, a_t, \sigma^i  \right) - \Pay\left(d_t, a_t, \sigma_*^i  \right)  \right| \\
&\leq& m \ .
\end{eqnarray*}

\end{proofof}}

\newcommand{\proofofTheoremPerfectInformation}{

\begin{proofof}{Theorem \ref{thm:PerfectInformation}}
By Claims \ref{claim:PerfInfoClaim} and \ref{claim:Sample}  Algorithm $\RegMinPerfInfo$ perfectly simulates the weighted majority algorithm \cite{littlestone1989weighted}. Notice that there are $N^n$ experts in $\PerfectInfoExpertSet$ and we are playing $T/K$ rounds of $\RepeatedGame{G}{K}$. The maximum payment in round of $\RepeatedGame{G}{K}$ is $K = m/\gamma$. The regret bound immediately follows from Claim \ref{claim:LossDifferences}  (the $\gamma = m/K$ term) and the standard regret bound from \cite{littlestone1989weighted} after setting 
\[\beta = \min \{\frac{1}{2}, \sqrt{\frac{n \ln \left(N\right)}{T}} \}  \ . \]

The regret bound holds against all experts $E \in \PerfectInfoExpertSet$ so in particular the regret bound also holds against all fixed experts $f \in \FixedDefender$ since $\FixedDefender \subset \PerfectInfoExpertSet$. 

The running time of $\RegMinPerfInfo$ is proportional to the number of traces in $\ConsistentAll.$ There are only  $n^{O\left(1/\gamma \right)}$ total traces in $\ConsistentAll$ so for any constant $\gamma$ the running time is polynomial.
\end{proofof}}

\newcommand{\proofOfSampleClaim}{

\begin{proof}
Given a trace $p = p_0; O; d$ let $\mathbf{Chosen}\left(p_0;O\right)$ be the event that the strategy output by Algorithm $\SampleAlgorithm$ plays $d$ from given history $p_0; O$. 

\begin{eqnarray*}
\Pr \left[\mbox{Output~} E \right] &=& \prod_{p \in \ConsistentAll, O \in \OutcomeSet} \Pr\left[ \mathbf{Chosen}\left( p;O \right) = E\left(p;O \right) \right] \\
&=& \prod_{p \in \ConsistentAll, O \in \OutcomeSet, d = E\left(p,O \right)} \frac{\hat{w}_{p;O;d}}{\sum_{d' \in \DefenderActions} \hat{w}_{p;O;d'}} \\
&=& \prod_{p \in \ConsistentAll, O \in \OutcomeSet, d = E\left(p,O \right)}  \frac{ \sum_{E':(p;O;d) \in \Consistent{E'}} \prod_{p' \in \Consistent{E'} \wedge p;O;d \sqsubset p' } w_{p'} }{\sum_{d' \in \DefenderActions} \sum_{E':(p;O;d') \in \Consistent{E'}} \prod_{p' \in \Consistent{E'} \wedge p;O;d' \sqsubset p' } w_{p'}} \\
&=& \prod_{p \in \ConsistentAll, O \in \OutcomeSet, d = E\left(p,O \right)}  \frac{ \sum_{E':(p;O;d) \in \Consistent{E'}} \prod_{p' \in \Consistent{E'} \wedge p;O;d \sqsubset p' } w_{p'} }{\sum_{d' \in \DefenderActions} \sum_{E':(p;O;d') \in \Consistent{E'}} \prod_{p' \in \Consistent{E'} \wedge p;O;d' \sqsubset p' } w_{p'}} \times \frac{\prod_{p' \sqsubset p } w_{p'}}{\prod_{p' \sqsubset p } w_{p'}} \\
&=& \prod_{p \in \ConsistentAll, O \in \OutcomeSet, d = E\left(p,O \right)}  \frac{ \sum_{E':(p;O;d) \in \Consistent{E'}} \prod_{p' \in \Consistent{E'}  } w_{p'} }{\sum_{d' \in \DefenderActions} \sum_{E':(p;O;d') \in \Consistent{E'}} \prod_{p' \in \Consistent{E'}  } w_{p'}} \\
&=& \prod_{p \in \ConsistentAll, O \in \OutcomeSet, d = E\left(p,O \right)}  \frac{ \sum_{E':(p;O;d) \in \Consistent{E'}} W_{E'} }{\sum_{d' \in \DefenderActions} \sum_{E':(p;O;d') \in \Consistent{E'}} W_{E'}} \\
&=& \prod_{p \in \ConsistentAll, O \in \OutcomeSet, d = E\left(p,O \right)}  \frac{ \sum_{E':(p;O;d) \in \Consistent{E'}} W_{E'} }{\sum_{E' \in \PerfectInfoExpertSet} W_{E'}} \\
&=& \frac{W_E}{\sum_{E' \in \PerfectInfoExpertSet} W_{E'}} \ .
\end{eqnarray*}   
\end{proof}}

\newcommand{\proofofTheoremImperfectInformation}{

\begin{proofof}{Theorem \ref{thm:ImperfectInformationApproximateRegret}}(Sketch)
We group the rounds of $\RepeatedGame{G}{K}$ into phases of $\frac{n^{1/\gamma}}{\gamma}$ rounds. Each phase now corresponds to 
\[ K\frac{n^{1/\gamma}}{\gamma} = \frac{m n^{1/\gamma}}{\gamma^2} \ , \]
rounds of $\BoundedMemoryGamesImperfectInfo$. As before there are $N^n$ experts. 

Within a single phase let $\vec{a}^i$ $\left(i=1,...,n^{1/\gamma}/\gamma\right)$ denote the actions of the adversary during round $i$ of that phase. To update our implicit weight representation we would like to compute  \[\sum_i \ell\left(p,\vec{a}^i,\sigma  \right) \ , \] for each $p \in \ConsistentAll$. However, we do not know the adversary actions $\vec{a}^i$ in each phase. Instead of computing  
 \[\sum_i \ell\left(p,\vec{a}^i,\sigma  \right) \ , \] 
we will estimate this quantity. For each \[\vec{d} \in \DefenderActions^\frac{m}{\gamma} \ , \] we will play the defender actions $\vec{d}$ in a randomly chosen round of the phase. Let $\vec{O}$ and $\vec{\ell} = \left(\ell_1,...,\ell_{m/\gamma}\right)$ denote the observed outcomes and payoffs in this round and let $p^j$ be the path corresponding to the first $j$ defender actions from $\vec{d}$ and outcomes from $\vec{O}$. For each path $p^j$ we set
\[\ell'\left(p^j,\sigma\right) = \frac{n^{1/\gamma}}{\gamma} \ell_j \ . \]
If the path $p$ never occured during a sampling round of the phase then we set
\[ \ell'\left(p^j,\sigma\right) = 0 \ .\]
For each path $p \in \ConsistentAll$ we have 
\begin{eqnarray*}
E\left[ \ell'\left(p,\sigma\right)\right] &=& \frac{n^{1/\gamma}}{\gamma}  E\left[\ell_i\right] \\ 
&=& \frac{n^{1/\gamma}}{\gamma} \sum_i \frac{\gamma}{n^{1/\gamma}} \ell\left(p,\vec{a}^i,\sigma  \right) \\
&=&    \sum_i  \ell\left(p,\vec{a}^i,\sigma  \right) \\
\end{eqnarray*}
where the expectation is taken over the random selection of sampling rounds. Now we can use the estimated losses $\ell'$ to maintain our implicit weight representation.

The following factors explain why the final regret bound is slightly worse than the bound in the perfect information setting (Theorem \ref{thm:PerfectInformation}): 

\begin{enumerate}
\item  We spend at most  
\[\left|\DefenderActions^\frac{m}{\gamma} \right| \leq n^{1/\gamma} \ ,\] rounds of each phase sampling. There are $ \frac{n^{1/\gamma}}{\gamma}$ rounds in a phase so the average sampling loss per round is at most
\[\frac{n^{1/\gamma}}{\left(\frac{n^{1/\gamma}}{\gamma}\right)} = \gamma \ . \]
This is in addition to modeling loss $\left(\gamma\right)$ from claim \ref{claim:LossDifferences}. In the perfect information setting there is no sampling loss just the modeling loss.
\item We are only now only updating weights after each phase. If $T$ is the number of rounds of the bounded-memory game $G$ that we play then we only update weights  $T'$ times where
\[ T' = \frac{T\gamma^2}{mn^{1/\gamma}} \ . \]
In the perfect information setting we had $T' = \frac{T\gamma}{m}$. 
\item The maximum loss in each phase is now the length of a phase
\[\frac{m}{\gamma} \left(\frac{n^{1/\gamma}}{\gamma} \right) \ , \]
instead of the length of a round $m/\gamma$.
\end{enumerate}

\end{proofof}}

\newcommand{\proofofClaimPerfectInfo}{

\begin{proofof}{Claim \ref{claim:PerfInfoClaim}}
First notice that we can write
\[ \sum_{j=1}^{T/K} \Pay\left(E, \vec{a}^j, \RepeatedGame{G}{K} \right) = \sum_{p \in \Consistent{E}} \sum_{j=1}^{T/K} \ell\left( p, \vec{a}^j, \sigma^{jK} \right) \ , \]
since the overall payoff of an expert $E$ can be expressed as a sum of the individual immediate payoffs after each action.

\begin{eqnarray*}
\prod_{p \in \Consistent{ E}} \beta^{\sum_{j=1}^{T/K} \ell\left(p,\vec{a}^j,\sigma^{jK}\right)} &=& \beta^{\sum_{p \in \Consistent{ E}} \sum_{j=1}^{T/K} \ell\left(p,\vec{a}^j,\sigma^{jK}\right)} \\
&=& \beta^{\sum_{t=1}^{T/K} \Pay\left(E, \vec{a}^t, \RepeatedGame{G}{K} \right)} \ .
\end{eqnarray*}

\end{proofof}}

Claim \ref{claim:LossDifferences} bounds the difference between the hypothetical losses from $\RepeatedGame{G}{K}$ and actual losses in $G$ using the bounded-memory property. 

\begin{reminderclaim}{\ref{claim:LossDifferences}}
\claimLossDifferences
\end{reminderclaim}

\proofClaimLossDifferences

The standard weighted majority algorithm maintains the invariant that 
$W_E = \beta^{\sum_{j=1}^{T/K} \Pay\left(E, \vec{a}^t, \RepeatedGame{G}{K}  \right)} $. Claim \ref{claim:PerfInfoClaim} says that $\RegMinPerfInfo$ also maintains this invariant. 

\begin{reminderclaim}{\ref{claim:PerfInfoClaim}}
\[
\prod_{p \in \Consistent{ E}} \beta^{\sum_{j=1}^{T/K} \ell\left(p,\vec{a}^j,\sigma^{jK}\right)} = \beta^{\sum_{j=1}^{T/K} \Pay\left(E, \vec{a}^j, \RepeatedGame{G}{K} \right)} \ . \]
\end{reminderclaim}

\proofofClaimPerfectInfo

\newcommand{\sampleClaim}{
Claim \ref{claim:Sample} says that $\SampleAlgorithm$ samples from the right distribution. 

\begin{reminderclaim}{ \ref{claim:Sample}}
For each expert $E \in \PerfectInfoExpertSet$ Algorithm $\SampleAlgorithm$ outputs $E$ with probability 
\[ \Pr\left[E \right] \propto W_E \ . \]
\end{reminderclaim}}

\sampleClaim

\proofOfSampleClaim

\begin{reminder}{\ref{thm:PerfectInformation}}
\RegMinPerfInfoThm
\end{reminder}

\proofofTheoremPerfectInformation 

\begin{reminder}{\ref{thm:ImperfectInformationApproximateRegret}}
\ImpInfoRegMinAlgThm
\end{reminder}

\proofofTheoremImperfectInformation

\begin{remark}
Because repeated games are a subset of bounded-memory games, 
$\RegMinPerfInfo$ (resp. $\RegMinImperfInfo$) could also be used to minimize oblivious regret in a repeated game of perfect information (resp. imperfect information)  using $\KAdaptiveDefenderSet$ as experts. In this case there is no modeling loss from claim \ref{claim:LossDifferences} so the guarantee is that we perform as well as the best K-adaptive defender strategy in hindsight. As long as $K = O\left(\log n \right)$ the running time of our algorithms will be time polynomial in $n$. 
\end{remark}

\cut{\begin{reminder}{\ref{thm:memoryDTIRRegretMin}}
\ImpInfoRegMinAlgThm
\end{reminder}}

\cut{
\begin{reminderclaim}{\ref{claim:PerfInfoClaim}}
\claimPerfInfoClaim
\end{reminderclaim}}

\cut{
\begin{reminder}{\ref{thm:PerfectInformation}}
\RegMinPerfInfoThm
\end{reminder}}

\section{Impossibility of Regret Minimization in Stochastic Games} \label{apx:StochasticGames}
\paragraph{Stochastic Games} 
Stochastic games are a generalization of repeated games, in which the   
payoffs depend on the state of play. Formally, a two-player stochastic  
game between an attacker~$A$ and a defender~$D$ is given by $(\DefenderActions, \AdversaryActions, \Sigma, \Pay, \tau)$, where $\AdversaryActions$ and $\DefenderActions$ are the actions spaces for players~$A$ and $D$,           
respectively, $\Sigma$ is the state space, $\Pay:  
\Sigma \times \DefenderActions \times  \AdversaryActions \rightarrow [0,1]$ is the payoff function and $\tau: \Sigma \times \DefenderActions \times  \AdversaryActions \times \{0,1\}^* \rightarrow \Sigma$ is the randomized transition    
function linking the different states. 

Thus, the payoff during round $t$ depends on the current state          
(denoted $\sigma^t$) in addition to the actions of the defender ($d^t$) and the adversary ($a^t$).  This added flexibility enables us to develop realistic game models    
for interactions where the rewards depend on game history. The          
hospital-employee interaction we introduced earlier is one example      
of such an interaction: an employee committing a given violation for    
the first time is unlikely to meet the same punishment as an employee   
committing the same violation for the tenth time.                       

A {\em fixed strategy} for the defender in a stochastic game is a function $f:\Sigma \rightarrow \DefenderActions$ mapping each state to a fixed action. $\FixedDefender$ denotes the set of all fixed strategies.

\label{sec:RegretMinStochasticGames}
In this section we demonstrate that there is no regret minimization algorithm  for the general class of stochastic games. More specifically for every notion of regret $k$ (oblivious ($k = 0$), $k$-adaptive, fully adaptive ($k = \infty)$) there is no $k$-adaptive minimization algorithm for the class of stochastic games.  It suffices to consider `oblivious regret' against an oblivious adversary (see remark \ref{remark:RegretMinAlgorithm}). The example in Theorem \ref{thm:stochasticImpossible} is fundamentally similar to example IV.1 of \cite{surveyregretstochastic08}.  

\begin{theorem}
\label{thm:stochasticImpossible}
There is a stochastic game $G$ such that for any defender strategies $\Defender$ there exists an oblivious adversary $\Adversary$ such that 
\[ \lim_{T \rightarrow \infty} \AverageRegret{k} \left(\Defender,\Adversary,G,T\right) > 0 \ .\]
\end{theorem}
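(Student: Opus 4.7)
The plan is to build an essentially classical absorbing-state stochastic game in which the defender must commit irreversibly to one of two incomparable ``paths'' in the very first round, while an oblivious adversary can be chosen after seeing the defender's (prior) strategy. Concretely, take $\Sigma = \{s_0, s_1, s_2\}$, $\DefenderActions = \{1,2\}$, $\AdversaryActions = \{1,2\}$, initial state $s_0$, with deterministic transitions $\tau(s_0, i, \cdot) = s_i$ and $\tau(s_i, \cdot, \cdot) = s_i$ for $i \in \{1,2\}$, and payoffs $\Pay(s_0, \cdot, \cdot) = 0$ and $\Pay(s_i, d, a) = \mathbbm{1}[a = i]$ for $i \in \{1,2\}$. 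Thus once the defender leaves $s_0$ he is trapped in $s_i$ forever, and the per-round payoff in the trap depends only on the adversary's current action.

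Next I would reduce to the oblivious case. Because the adversary I will exhibit is oblivious, the observation immediately following Definition~\ref{def:RegMinAlgApproximate} in Section~\ref{sec:regretMin} gives $\AverageRegret{k} = \AverageRegret{0}$ for every $k \in \{0,1,\ldots,\infty\}$, so it suffices to exhibit an oblivious $\Adversary$ and a lower bound on the $0$-adaptive regret with fixed-strategy experts $S = \FixedDefender$.

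Fix any defender strategy $\Defender$ and let $p = p(\Defender) = \Pr[\Defender(s_0) = 1]$, where the probability is over the defender's internal coins in round $1$ starting from $\sigma_0 = s_0$. Define two oblivious adversaries $\Adversary_1, \Adversary_2$ by $\Adversary_j \equiv j$ (constant play of action $j$). Against $\Adversary_j$, the fixed strategy $f_j \in \FixedDefender$ that plays $j$ in $s_0$ reaches $s_j$ at round $2$ and earns payoff $1$ per round thereafter, so $\AveragePay(f_j, \Adversary_j, G, T) \geq (T-1)/T$. On the other hand, $\Defender$ is in $s_1$ from round $2$ onwards with probability $p$ and in $s_2$ with probability $1-p$, so since payoff in $s_i$ against $\Adversary_j$ is $\mathbbm{1}[i=j]$ we get $\AveragePay(\Defender, \Adversary_1, G, T) \leq p$ and $\AveragePay(\Defender, \Adversary_2, G, T) \leq 1-p$. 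Choosing $\Adversary := \Adversary_1$ if $p \leq 1/2$ and $\Adversary := \Adversary_2$ otherwise yields
\[
\AverageRegret{k}(\Defender, \Adversary, G, T, \FixedDefender) \;\geq\; \frac{T-1}{T} - \min(p, 1-p) \;\geq\; \frac{T-1}{T} - \frac{1}{2},
\]
which tends to $1/2 > 0$ as $T \to \infty$, proving the theorem.

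There is no real obstacle here: the construction is the standard ``commit to one of two absorbing traps'' example and the only moderately subtle point is that we are allowed to pick the adversary as a function of $\Defender$ (only the adversary's actions must be oblivious to the realized history), which is exactly what the quantifier order in the theorem statement permits. The appeal to the oblivious-equals-$k$-adaptive observation then disposes of all $k$ uniformly.
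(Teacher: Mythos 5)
Your proof is correct. The two-sink game is a legitimate stochastic game under the paper's definitions, the quantifier order in the theorem does let you choose the constant (hence oblivious) adversary after seeing $\Defender$, the observation that $\AverageRegret{k}=\AverageRegret{0}$ for oblivious adversaries applies, and since everything after round $1$ is determined by the defender's first action, the bound $\AveragePay\left(\Defender,\Adversary_j,G,T\right)\leq \min(p,1-p)\leq 1/2$ for the appropriate $j$ gives limiting regret at least $1/2$. The paper's counterexample (Figure \ref{fig:counterexample}) is different in mechanism: it has a single absorbing reward state $\sigma_2$ reachable only when defender and adversary simultaneously play $(d_1,a_1)$, and the probing action $d_1$ costs $-1$ in $\sigma_1$; the tension is that the defender must probe constantly to catch a possibly never-occurring opportunity, so either the ``always $d_2$'' expert beats him (when the adversary never plays $a_1$) or the ``always $d_1$'' expert does (when the adversary plays $a_1$ at a moment the defender is unlikely to probe). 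Making that rigorous requires an exploration-versus-cost argument that the paper only sketches, whereas your fork-in-the-road construction collapses the entire analysis to the one-line fact that $\min(p,1-p)\leq 1/2$, yielding an explicit constant and a clean limit. What the paper's choice buys instead is reuse: the same example is invoked immediately afterwards to explain why fully adaptive regret and sliding-window (non-forgetful) adversaries are hopeless, since entry into the sink depends on defender--adversary coordination; your game, whose transitions ignore the adversary entirely, does not illustrate that point as directly. Both arguments establish the theorem.
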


\begin{proof}

In particular, consider the stochastic game $G$ illustrated in Figure \ref{fig:counterexample}.  The figure shows a game with two players D and A with action sets $\DefenderActions = \{d_1,d_2\}$ and  $ \AdversaryActions = \{a_1,a_2\}$ respectively. The reward function for the defender depends only on his own action as well as the current state $\sigma$.  Observe that $\sigma_2$ is a sink state which the game can never leave.  If the game reaches this state then the defender will be continuously rewarded in every round for the rest of the game.  However, the only way to reach $\sigma_2$ is if the defender and the adversary play $\left(d_1,a_1\right)$ simultaneously in some round $t$. If the defender fails to play $d_1$ then he might permanently miss his opportunity to reach $\sigma_2$.  This suggests that the defender must always play $d_1$. However, if the adversary never plays $a_1$ then it is best to use the fixed strategy always play $d_2$. 
\end{proof}

Notice that for any $\Adversary \in \KAdaptiveAdversarySet{0}$ and any defender strategy $\Defender$ we have 
\[\AveragePay\left(\Defender, \Adversary ,G, T \right) = \AveragePay\left(\Defender, \Adversary_0 ,G, T \right) \ \ , \]
because $\Adversary = \Adversary_0$. Hence, $\AverageRegret{0} = \AverageRegret{k}$ whenever the adversary is oblivious.

\begin{remark} \label{remark:RegretMinAlgorithm}
\smallskip 
\begin{enumerate}
\item If $\Defender$ can minimize k-adaptive regret against any k-adaptive adversary then $\Defender$ can minimize k-adaptive regret against any oblivious adversary ($k=0$) because
 \[ \KAdaptiveAdversarySet{0} \subset \KAdaptiveAdversarySet{k} \ .\]
\item If $\Defender$ can minimize k-adaptive regret against any k-adaptive adversary then $\Defender$ can minimize k-adaptive regret against any oblivious adversary because $\AverageRegret{0} = \AverageRegret{k}$ whenever the adversary is oblivious. 
\item If $\Defender$ is a k-regret minimization algorithm a class of games $\GameClass$ and  $\GameClass '$ is a subclass of  $\GameClass$ then $\Defender$ is also a k-regret minimization algorithm for the class of games $\GameClass '$.
\end{enumerate}
\end{remark}

\begin{figure}[ht] 
\begin{center}
\includegraphics[scale=0.75]{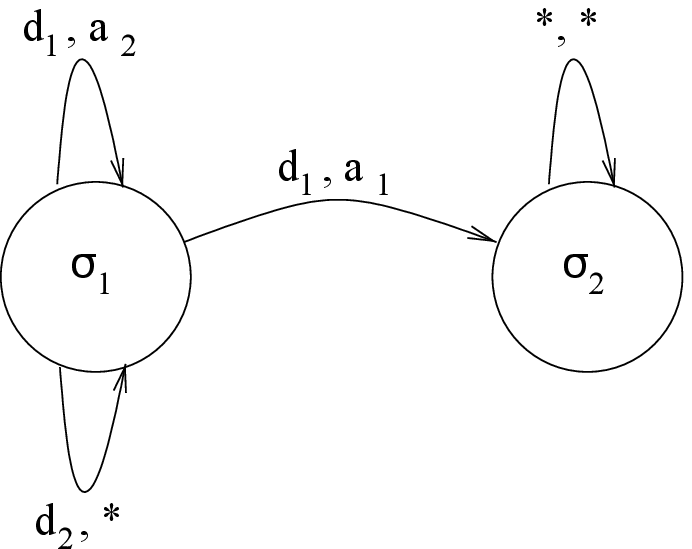}

\smallskip
\begin{tabular}{ll}
$P(d_1,\sigma_1) =-1$ & $P(d_1,\sigma_2) = 1$\\
$P(d_2,\sigma_1) = 0$ & $P(d_2, \sigma_2) = 1$
\end{tabular} 
\caption{A counterexample to prove Theorem \ref{thm:stochasticImpossible}}
\label{fig:counterexample}
\end{center}
\end{figure}

This example also illustrates why it is impossible to minimize fully adaptive regret against a non-forgetful adversary. In particular a non-forgetful adversary could use the states from \ref{fig:counterexample} to decide whether or not to cooperate. Note that even if the adversary can only see the last $m$ outcomes (sliding window) the adversary could play to remind himself of events arbitrarily long ago. For example, an adversary who wanted to remember whether or not the defender played action $d$ during round $1$ might play a special reminder action every $m$ rounds when the latest reminder is about to go out of memory.  

% Sliding Window Comment

\end{document}